\newtheorem{theorem}{Theorem}
\newtheorem{lemma}{Lemma}
\newtheorem{corollary}{Corollary}
\newtheorem{definition}{Definition}
\newcommand{\tr}[1]{\operatorname{\textnormal{Tr}}\left[ {#1} \right]}  %Trace
\newcommand{\norm}[1]{\vert \vert #1 \vert \vert}
\begin{document}
\raggedbottom	
	\title{Concentration of quantum equilibration and an estimate of the recurrence time}
	\date{\today}
	
\author{Jonathon Riddell}
\email{jonathon.riddell@nottingham.ac.uk}
\affiliation{Department of Physics \& Astronomy, McMaster University, 1280 Main St.  W., Hamilton ON L8S 4M1, Canada.}
\affiliation{Perimeter Institute for Theoretical Physics, Waterloo, ON N2L 2Y5, Canada}

\author{Nathan J. Pagliaroli}
\email{npagliar@uwo.ca}
\affiliation{Department of Mathematics, Western University, 1151 Richmond St, London ON N6A 3K7, Canada}

\author{\'Alvaro M. Alhambra}
\email{alvaro.alhambra@csic.es}
\affiliation{Max-Planck-Institut für Quantenoptik, Hans-Kopfermann-Straße 1, D-85748 Garching, Germany}

	\begin{abstract}
    We show that the dynamics of generic quantum systems concentrate around their equilibrium value when measuring at arbitrary times. This means that the probability of finding such values away from that equilibrium is exponentially suppressed, with a decay rate given by the effective dimension. Our result allows us to place a lower bound on the recurrence time of quantum systems, since recurrences corresponds to the rare events of finding a state away from equilibrium. In many-body systems, this bound is doubly exponential in system size. We also show corresponding results for free fermions, which display a weaker concentration and earlier recurrences.
	\end{abstract}
	
		\maketitle
	
Closed quantum systems obey the Schr\"odinger equation, so that their dynamics are both unitary and reversible. Most large systems, however, seem to quickly evolve towards a steady state for long times, with only very small out-of-equilibrium fluctuations around it. This process is usually called \emph{equilibration}, and is associated with the emergence of statistical physics \cite{gogolin2016equilibration,Ueda2020}.
The equilibrated or average expectation value of an observable $A$ is
\begin{equation}\label{eq:average}
\overline{ \langle A \rangle }= \lim_{T \rightarrow \infty} \int_0^T \frac{\text{d}t}{T} \langle A(t) \rangle,
\end{equation}
where $\langle A(t) \rangle= \bra{\Psi}e^{-iHt}A e^{iHt} \ket{\Psi}$ for some initial state $\Psi$ and Hamiltonian $H$. 

%Here, we analyze and bound the fluctuations around $\langle A (\infty) \rangle$ that occur in the equilibration process, putting strong constraints on their magnitude and occurrence. % when sampling the expectation value at random times $ t \in [0,\infty )$. 
If a system equilibrates, it is because the probability of finding $\langle A(t) \rangle$ very close to $\langle \overline{ \langle A \rangle} \rangle$ at any given time is overwhelmingly large. We show that this is indeed the case: the dynamics of quantum systems with a \emph{generic} spectrum concentrate highly around the steady-state value $\langle \overline{ \langle A \rangle} \rangle$.

More specifically, we show that when sampling times at random $t \in [0,\infty)$ the probability of finding the system away from equilibrium is exponentially suppressed. The decay rate of that exponential is given by the \emph{effective dimension} or \emph{inverse participation ratio}. This is defined as $ \tr{\omega^2}^{-1}$, where $\omega$ is the diagonal ensemble 
\begin{equation}
    \omega=  \lim_{T \rightarrow \infty} \int_0^T \frac{\text{d}t}{T} e^{-iHt} \ket{\Psi}\bra{\Psi} e^{iHt} ,
\end{equation}
which is such that $\tr{A \omega}= \overline{\langle A \rangle}$. A similar result to what we here prove was argued to hold as a consequence of the ETH in \cite{Srednicki99}.
%This number also typically grows exponentially with system size in many-body systems \cite{wilming2018entanglement}. from this, we find that out-of-equilibrium fluctuations are \emph{doubly exponentially} suppressed. 
%I removed this comment for space and redundancy

That this equilibration happens, leaving little or no memory from the initial conditions, seems to conflict with the unitarity and reversibility of the dynamics. This conflict can be seen by considering the Poincar\'e recurrence theorem in quantum mechanics \cite{Bocchieri1957,Percival1961,Hogg82,Duvenhage2002,Wallace2015}, which states that any closed quantum evolution eventually returns arbitrarily close to its initial state.

The solution to this problem is  that even if the initial state is eventually recovered to an arbitrarily good approximation, this only happens at extremely long times. These recurrences constitute large out-of-equilibrium fluctuations, that can be understood as the rare events of finding a system far from its equilibrated state.

Based on this idea, we show how a lower bound on the average spacing between recurrences follows from our concentration results, as the inverse of the tail bound. We find that recurrences occur at time intervals that are at least exponential in the effective dimension. This gives a mathematically rigorous scaling on the average recurrence time, that matches the scaling of previous estimates \cite{Bhattacharyya1986} and exact calculations \cite{Campos_Venuti_Recurrence}. See \cite{Hogg83,PeresRecurrence} for other related results.

We also show equivalent results for free fermion Hamiltonians with generic single-particle modes. We find that under the assumption of extensivity in the single particle eigenstates a similar concentration bound and recurrence time result hold, but with a slower exponential scaling on the lattice size. This shows the markedly different behaviour with respect to generic models. See Table \ref{table:1} for a summary.

\renewcommand{\arraystretch}{1.8}
\begin{table}[]
\begin{tabular}{| l | c| c|}
\hline & $\langle A (t) \rangle$  &  $\vert\bra{\Psi} e^{-itH} \ket{\Psi}\vert^2$  \\  \hline \hline 
Generic \, & $ e^{\Omega\left(\tr{\omega^2}^{-1/2}\right)}$  & $e^{\Omega\left(\tr{\omega^2}^{-1}\right)}$ \\ \hline Free & $e^{\Omega\left(\sqrt{L}\right)}$ & $e^{\Omega\left(L\right)}$ \\ \hline
\end{tabular}
\caption{Lower bounds on the recurrence time for different dynamical quantities. $\tr{\omega^2}^{-1}$ is the effective dimension of a system with generic spectrum and $L$ is the number of sites in a fermionic lattice. In the free case, the observable and initial state are restricted to specific forms. See below for the precise statements.}
\label{table:1}
\end{table}

Our results constitute a qualitative improvement over previous bounds on out-of-equilibrium fluctuations \cite{Reimann08,Lin09,Reimann_2012} for systems with a generic spectrum. These only focused on the variance induced by the probability measure  $\lim_{T \rightarrow \infty} \int_0^T \frac{\text{d}t}{T}$, while we are able to analyze arbitrarily high moments thereof. The improvement is exponential in the same sense in which the Chernoff-Hoeffding bound is exponentially better than Chebyshev's inequality.

%\renewcommand{\arraystretch}{2.1}
%\begin{table}[]
%\begin{tabular}{ | c| c|c|c|}
%\hline  $\langle A (t) \rangle$  &  $\vert\bra{\Psi} e^{-itH} \ket{\Psi}\vert^2$  &  $\langle f_m^\dagger f_n(t) %\rangle$   &  $ \langle \{f_m^\dagger(t),f_n\} \rangle$   \\  \hline \hline 
 %$ e^{\mathcal{O}\left(\tr{\omega^2}^{-1/2}\right)}$  & $e^{\mathcal{O}\left(\tr{\omega^2}^{-1}\right)}$  & %$e^{\mathcal{O}\left(\sqrt{L}\right)}$ & $e^{\mathcal{O}\left(L\right)}$ \\ \hline
%\end{tabular}
%\caption{Lower bounds on the recurrence time for different dynamical quantities. $\tr{\omega^2}^{-1}$ is the effective dimension of a generic system and $L$ is the number of sites in a fermionic lattice. See Sec. \ref{sec:recurrence} for the precise statements.}
%\label{table:1}
%\end{table}

%%%%%%%%%
%%%%%%%%%

\section{Concentration bound}
We consider functions of time $f(t)$ that track some physical property of interest. In the cases here, $f(t) = \langle A(t) \rangle$ is the expectation value of a time-evolved operator $A(t)$. This allows us to define the moments of a probability distribution

\begin{align}
\overline{f} & \equiv   \lim_{T \rightarrow \infty} \int_0^T \frac{\text{d}t}{T} f(t),
\\ &   \mu_q \equiv  \lim_{T \rightarrow \infty} \int_0^T \frac{\text{d}t}{T} \left(f(t) - \overline{f}  \right)^q. \label{eq:moments}
\end{align}

These moments are bounded for arbitrary $q$ as $\mu_q \le (2\vert \vert A \vert \vert)^q$.  This means that they uniquely determine a characteristic function with an infinite radius of convergence
\begin{equation}
   \phi(\lambda)= \sum_q \frac{\mu_q \lambda}{q!}. 
\end{equation}
This function defines a probability distribution, which we can write formally as \begin{equation}\label{eq:probx}
    P(x) = \lim_{T \rightarrow \infty} \int_0^T \frac{\text{d}t}{T} \delta (x - f(t)).
\end{equation}
Here $P(x)$ should be understood as the probability that, if we pick a random time $ t \in [0,\infty )$, the value of $f(t)$ is exactly $x$ (see also \cite{Campos_Venuti_20102} and \cite{Campos_Venuti_2015} for an overview of previous results). Note that in order to compute $\int x^q P(x)dx$ the limit in $T$ is swapped with the integral in $x$. For justification of this see Appendix \ref{app:dom converge}.  An example of $P(x)$ forming for $f(t) = \langle A(t) \rangle$ is given in Fig. \ref{fig:hist}. Fig. \ref{fig:hist} shows how averaging over a finite interval gives concentrated distributions around equilibrium. The distribution becomes sharper as we increase the length of time being averaged over.

\begin{figure}[h!]
\centering
\includegraphics[width=\linewidth]{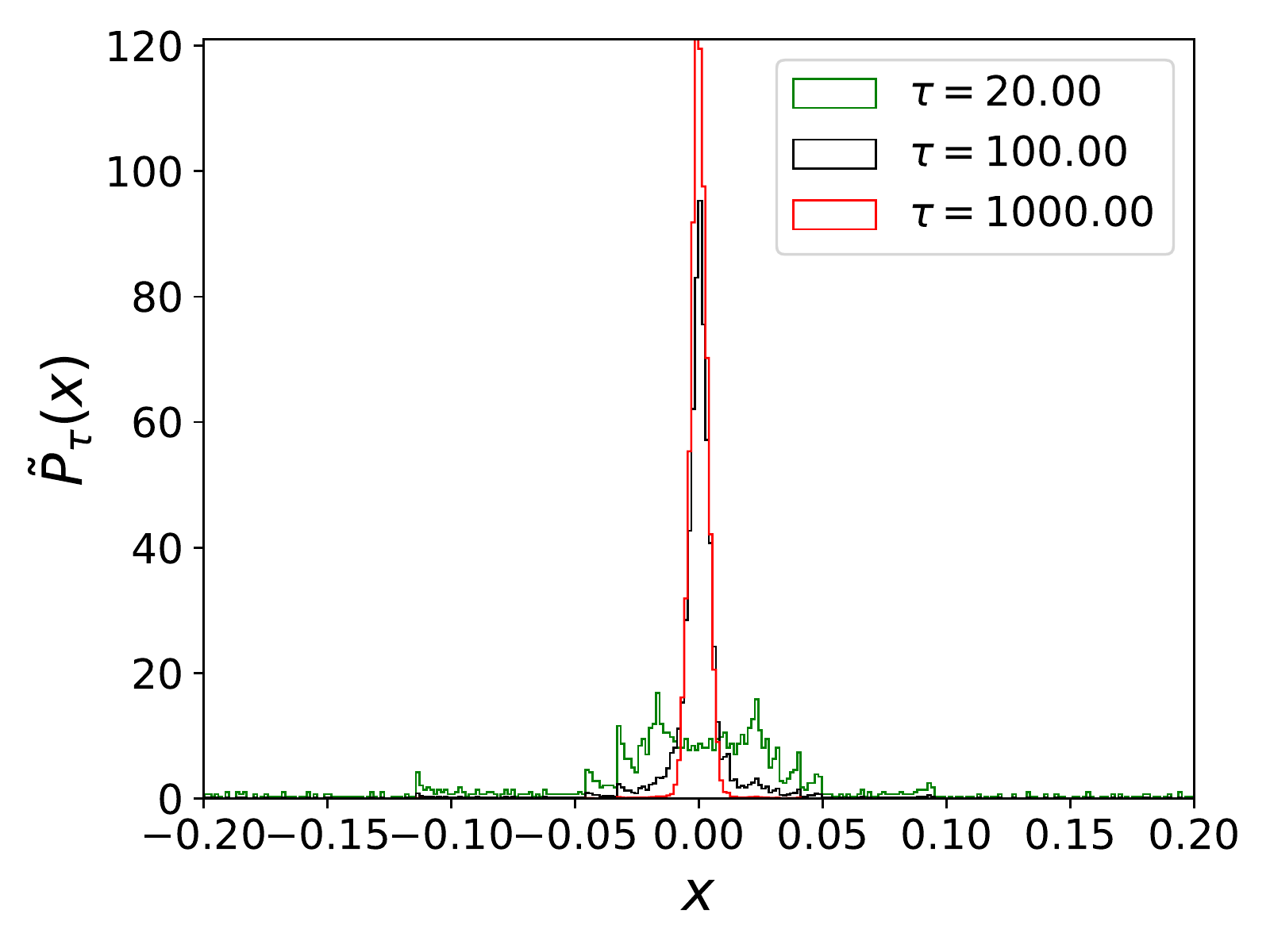}
\caption{Convergence of $P_T(x)$ to $P(x)$ where $P_T(x) = \int_0^T \frac{\text{d}t}{T} \delta (x - \langle A(t) \rangle)$ and $P(x)$ is recovered as $T\to \infty$. $\tilde{P}_T(x)$ represents the approximation of $P_T(x)$ by binning samples and constructing a histogram.  1000 bins were used to create this histogram. Numerics were performed on a spin 1/2 chain with 22 sites. The data is normalized such that $\int_{-\infty}^\infty \tilde{P}_T(x) dx = 1 $. The Hamiltonian is a Heisenberg type model with nearest and next nearest neighbour interactions.  The parameters in the Hamiltonian are chosen so that we have a non-integrable model. Further details can be found in App. \ref{app:numerics}. }
\label{fig:hist}
\end{figure}
Below we prove that the moments $\mu_q$ are bounded by

\begin{equation} \label{eq:momform}
    \kappa_{q} \leq \left(q g \right)^q,
\end{equation}
where $g$ is some small quantity, decreasing quickly as the size of the system grows and such that $g\rightarrow 0$ in the thermodynamic limit.
%fast-decreasing function of $N$,  where $N$ is the number of degrees of freedom, or some other suitable quantity for which $N\rightarrow \infty$ yields the thermodynamic limit. 
A bound of this form implies that the distribution concentrates highly around the average, as per the following elementary lemma.
\begin{lemma} \label{th:Apr}
Let $\vert \mu_q \vert \leq \left(q g \right)^q$ for $q$ even. Then,
\begin{align}
    \mathrm{Pr}\left[ \big \vert f(t)- \overline{f}  \big \vert \ge \delta \right] \le 2e \times \exp\left( -\frac{\delta}{e g} \right)  .
\end{align}
\end{lemma}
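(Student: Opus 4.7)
The plan is to apply Markov's inequality at an even moment of $f(t)-\overline{f}$ and then optimize the order of the moment, yielding a Chernoff-type tail estimate.

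Since $(f(t)-\overline{f})^q \ge 0$ for even $q$, we have $\mu_q = \mathbb{E}_P[|f(t)-\overline{f}|^q]$, where $\mathbb{E}_P$ denotes expectation against the probability distribution $P(x)$ of Eq.~(\ref{eq:probx}). Markov's inequality then gives
\begin{equation}
\Pr\bigl[|f(t)-\overline{f}|\ge \delta\bigr] \;\le\; \frac{\mu_q}{\delta^q}\;\le\; \left(\frac{qg}{\delta}\right)^q
\end{equation}
for every even integer $q\ge 2$. Viewed as a function of a continuous $q$, the right-hand side is $\exp(q\log(qg/\delta))$ and is minimized at $q^{\star} = \delta/(eg)$, taking the value $\exp(-q^{\star}) = \exp(-\delta/(eg))$. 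This is already the exponential decay rate of the lemma, so the entire remaining task is to absorb the discreteness of the admissible $q$ into a constant prefactor.

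To do this I would take $q$ to be the even integer nearest to $q^{\star}$ (valid when $q^{\star} \ge 2$), write $q = q^{\star}+\epsilon$ with $|\epsilon|\le 1$, and apply $\log(1+x)\le x$, obtaining
\begin{equation}
\left(\frac{qg}{\delta}\right)^q = \left(\frac{1+\epsilon/q^{\star}}{e}\right)^{q^{\star}+\epsilon} \le \exp\!\left(\frac{\epsilon^2}{q^{\star}} - q^{\star}\right) \le \sqrt{e}\,e^{-q^{\star}},
\end{equation}
which is comfortably inside $2e\,\exp(-\delta/(eg))$.

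The only remaining bookkeeping is the small-$\delta$ regime where $q^{\star} < 2$. When $q^{\star} \le \log(2e)$ the claimed bound already exceeds $1$ and is trivially true; in the narrow intermediate band $\log(2e) < q^{\star} < 2$ one uses $q = 2$ directly, giving $(2g/\delta)^2 = 4/(e^2 q^{\star 2})$, and checks this is dominated by $2e\,e^{-q^{\star}}$ throughout the band. I expect the only mild obstacle to be exactly this rounding step, namely verifying that the restriction to even integer $q$ costs at most the factor $2e$ in the prefactor uniformly in $\delta$; the rest of the argument is a one-line Markov estimate.
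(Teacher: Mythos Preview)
Your proof is correct and follows the same Markov-plus-optimization strategy as the paper: the paper simply takes $q=\lfloor \delta/(eg)\rfloor$, uses $qg/\delta\le 1/e$ to get $(qg/\delta)^q\le e^{-q}\le e\cdot e^{-\delta/(eg)}$, and doubles for the two one-sided tails. If anything you are more careful than the paper, which glosses over the parity constraint on $q$ entirely; your rounding analysis and treatment of the small-$\delta$ regime patch that gap.
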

\begin{proof}
Let us set $\overline{f}=0$ for simplicity, and focus on the case $x \ge \delta$. We have that 
\begin{align}
     \mathrm{Pr}\left[  \langle A(t) \rangle  \ge \delta \right] &= \int_{x \ge \delta} P(x) \text{d} x \\ 
     &\le \frac{1}{\delta^q}\int_{x \ge \delta} x^q P(x) \text{d} x 
     \\ &\le \frac{\mu_q}{\delta^q} \\
     &\le \left ( \frac{q g }{\delta} \right)^q.
\end{align}

A similar inequality holds for $x \le -\delta$, so that $
     \mathrm{Pr}\left[ \big \vert f(t) \big \vert  \ge \delta \right] \le 2  \left ( \frac{q g }{\delta} \right)^q $. The bound is obtained by choosing $q = \lfloor \frac{\delta}{e g } \rfloor$.
\end{proof}

We now simply need to find the corresponding $g$ for the concentration bound to hold, which we do for various physical problems.

%%%%%%%%%
%%%%%%%%%
\section{Generic models}
First we consider models governed by a Hamiltonian $H = \sum_{m=1}^D E_m |E_m\rangle \langle E_m |$,
which we assume to have a discrete and \emph{generic} spectrum.

\begin{definition} \label{def:genericspec} Let $H$ be a Hamiltonian with spectrum $H= \sum_j E_j \ket{E_j}\bra{E_j}$, and let $\Lambda_q,\Lambda'_q$ be two arbitrary sets of $q$ energy levels $\{ E_{j} \}$. The spectrum of $H$ is generic if for all $q \in \mathbb{N}$ and all $\Lambda_q,\Lambda'_q$, the equality
\begin{equation}
    \sum_{j \in \Lambda_q } E_{j} =    \sum_{j \in \Lambda'_q } E_{j}
\end{equation}
implies that $\Lambda_q=\Lambda'_q$.
\end{definition}
This condition is expected to hold in non-integrable and chaotic models, such as those with Wigner-Dyson level statistics, and is otherwise expected to not hold in integrable models. It has previously appeared in the literature under various names \cite{Kaneko_2020,Choi_2022}. It is an extension of the well-known non-degenerate gaps condition, which is the $q=2$ case \cite{gogolin2016equilibration,short2011equilibration}, and also implies that the energy spectrum is non-degenerate. A possible stronger condition, that implies Def. \ref{def:genericspec}, is that of rational independence of the energy levels (see e.g. \cite{Goldstein_2006,Campos_Venuti_2010}).  Notably, the probability of uniformly choosing a non-generic Hamiltonian is zero, as seen in the following lemma.

\begin{lemma} \label{lem:HermitianMatrices}
 	For any positive integer $d\geq 2$, the set of $d \times d$ complex Hermitian matrices that are not generic has Lebesgue measure zero.
 \end{lemma}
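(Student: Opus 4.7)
The plan is to reduce the non-generic condition to a polynomial equation in the entries of $H$ and then invoke the standard fact that the zero locus of a nonzero real polynomial has Lebesgue measure zero. First I would reformulate: if $\Lambda_q,\Lambda'_q$ are distinct sets of $q$ eigenvalues with equal sum, cancelling common elements yields a nonzero integer vector $c\in\{-1,0,1\}^d$ (with $\sum_j c_j=0$) such that $\sum_{j} c_j E_j=0$ for some labelling of the spectrum. Hence the non-generic set $\mathcal{N}$, viewed inside the space of Hermitian matrices $\mathcal{H}(d)\cong\mathbb{R}^{d^2}$, decomposes as a finite union $\mathcal{N}=\bigcup_c S_c$ indexed by the (finitely many) admissible $c$, where $S_c$ is the set of Hermitian matrices whose unordered spectrum admits the relation indexed by $c$. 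It therefore suffices to prove each $S_c$ is Lebesgue-null.

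For fixed $c$, I would consider the symmetrized product
$$
P_c(M)\;=\;\prod_{\sigma\in S_d}\Big(\sum_{j=1}^d c_j\,E_{\sigma(j)}(M)\Big),
$$
which is symmetric in the eigenvalues of $M$. By the fundamental theorem on symmetric polynomials, $P_c$ is a polynomial in the elementary symmetric functions of the eigenvalues, i.e.\ in the coefficients of the characteristic polynomial of $M$, and hence a real polynomial in the entries of $M$. By construction $P_c(M)=0$ iff some permutation of the eigenvalues satisfies the relation, i.e.\ iff $M\in S_c$.

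The crucial non-triviality check, and the main (but mild) obstacle, is that $P_c\not\equiv 0$. I would verify this by exhibiting the diagonal witness $M_0=\operatorname{diag}(1,\pi,\pi^2,\dots,\pi^{d-1})$: transcendence of $\pi$ implies $1,\pi,\dots,\pi^{d-1}$ are linearly independent over $\mathbb{Q}$, so for every permutation $\sigma$ the sum $\sum_j c_j\pi^{\sigma(j)-1}$ is a nonzero integer-coefficient polynomial evaluated at $\pi$ and hence nonzero, giving $P_c(M_0)\neq 0$. The elementary fact that the zero locus of a nonzero real polynomial on $\mathbb{R}^{d^2}$ has Lebesgue measure zero (a short induction on dimension via Fubini) then yields $|S_c|=0$, and since $\mathcal{N}$ is a finite union of null sets it is itself null.
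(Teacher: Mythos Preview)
Your proof is correct and follows the paper's essential strategy: encode non-genericity as the vanishing of a polynomial symmetric in the eigenvalues, use the fundamental theorem of symmetric polynomials to rewrite it as a polynomial in the real coordinates of $H$, and conclude that its zero set is Lebesgue-null. The executions differ in bookkeeping. The paper packages everything into a single product $F(H)=\prod\big(\sum_i E_{n_i}-E_{m_i}\big)$ over all admissible index tuples, whereas you first reduce the failure of genericity to the finitely many relation vectors $c\in\{-1,0,1\}^d$ with $\sum_j c_j=0$ and then symmetrize each over $S_d$ to obtain $P_c$. Your decomposition makes the finiteness of the union transparent (no implicit reliance on the observation that only $q\le d$ matters after cancellation), and, more importantly, you supply an explicit nontriviality witness $M_0=\operatorname{diag}(1,\pi,\dots,\pi^{d-1})$ showing $P_c\not\equiv 0$ via the transcendence of $\pi$---a step the paper omits but which the argument genuinely requires.
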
  
 The proof is a straightforward generalization of the $q=2$ case in \cite{huang2020instability} and can be found in Appendix \ref{app:generic}. 
%From lemma \ref{lem:HermitianMatrices} we can see that the models mentioned in our definitions would be dense in the set of possible models that we might write down. \jon{Cite Yichen's article here}

Consider $f(t) = \langle \psi| A(t)| \psi \rangle$ to be the pure state time evolution of some observable $A$. The first concentration result is as follows. 
\begin{theorem} \label{th:genmoments}
Let $H$ have a generic spectrum, with $\omega$ the diagonal ensemble, and $\vert \vert A \vert \vert$ the largest singular value of $A$. The moments in Eq. \ref{eq:moments} are such that
\begin{equation}
    \mu_q \le \left( q \vert \vert A \vert \vert \sqrt{\mathrm{Tr}[\omega^2]}  \right)^q.
\end{equation}
\end{theorem}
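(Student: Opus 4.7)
The plan is to expand $f(t)-\overline{f}$ in the energy eigenbasis as $\sum_{n\neq m} c_n^*c_m A_{nm}\,e^{i(E_n-E_m)t}$ with $c_n=\langle E_n|\Psi\rangle$, raise it to the $q$-th power, and take the infinite-time average. Under the generic-spectrum assumption of Definition~\ref{def:genericspec}, the only surviving terms in this $2q$-fold sum correspond to index tuples $(\vec n,\vec m)$ satisfying the multiset equality $\{n_i\}=\{m_i\}$. Crucially, on such tuples the coefficient factor $\prod_i c_{n_i}^* c_{m_i}$ rearranges to $\prod_i |c_{n_i}|^2$, exposing the diagonal-ensemble weights directly.

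Next I would parameterize each surviving tuple by a permutation $\sigma\in S_q$ via $m_i=n_{\sigma(i)}$. When all $n_i$ are distinct this is a bijection; when entries coincide the same tuple corresponds to several permutations, an overcount that must be tracked. The cycle decomposition of $\sigma$ into cycles of lengths $k_1,\ldots,k_r$ with $\sum_j k_j=q$ then lets the inner sum over $\vec n$ factor into a product $\prod_{j}\mathrm{Tr}[(\omega A')^{k_j}]$, where $A'$ is the off-diagonal part of $A$ in the energy basis (so that the constraint $n_i\neq m_i$ is enforced automatically by the vanishing diagonal). This is the standard Schur--Weyl identity $\mathrm{Tr}[V_\sigma B^{\otimes q}]=\prod_c\mathrm{Tr}[B^{|c|}]$ applied to $B=\omega A'$.

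For each cycle trace I would invoke H\"older's inequality for Schatten norms. Writing $M=\omega^{1/2}A'\omega^{1/2}$, cyclicity gives $\mathrm{Tr}[(\omega A')^k]=\mathrm{Tr}[M^k]$, and Schatten--H\"older yields $|\mathrm{Tr}[M^k]|\le\|M\|_k^k\le\|\omega^{1/2}\|_{2k}^{2k}\|A'\|^k=\mathrm{Tr}[\omega^k]\,\|A'\|^k$. The elementary bound $p_n^k\le p_n^2(\max_m p_m)^{k-2}\le p_n^2\,\mathrm{Tr}[\omega^2]^{(k-2)/2}$ summed over $n$ gives $\mathrm{Tr}[\omega^k]\le\mathrm{Tr}[\omega^2]^{k/2}$ for $k\ge 2$. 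Multiplying these estimates across all cycles of $\sigma$ produces a uniform bound $\|A\|^q\mathrm{Tr}[\omega^2]^{q/2}$, independent of cycle structure, after absorbing $\|A'\|\le 2\|A\|$ into the prefactor. Summing over $|S_q|=q!$ permutations and applying Stirling's $q!\le q^q$ delivers the claimed $\mu_q\le(q\|A\|\sqrt{\mathrm{Tr}[\omega^2]})^q$.

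The main obstacle will be the combinatorial bookkeeping of the permutation step: the parameterization overcounts when the entries of $\vec n$ coincide, so one must either pick a canonical $\sigma$ per tuple (at the cost of losing the clean cycle factorization) or verify that the overcounted pieces combine with the correct sign to preserve $|\mu_q|\le\sum_\sigma|T_\sigma|$. For the even-$q$ cases needed to apply Lemma~\ref{th:Apr}, $\mu_q\ge 0$ and the overcounted contributions generate manifestly nonnegative pieces of the form $|A_{nm}|^2,|A_{nm}|^4,\ldots$ whose sign aligns with $\mu_q$, so the inequality survives; the same accounting for odd $q$ requires slightly more care but still only costs overall factors absorbed in the $q^q$ prefactor.
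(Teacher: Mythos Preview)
Your overall strategy coincides with the paper's: expand in the energy eigenbasis, use the generic-spectrum condition to reduce the time average to a sum over derangements, decompose each derangement into cycles to obtain products $\prod_j\tr{(\omega A)^{\ell_j}}$, bound each trace factor uniformly, and sum the $!q\le q^q$ contributions. The one genuine technical difference is how the cycle traces are bounded. You symmetrize to the Hermitian $M=\omega^{1/2}A'\omega^{1/2}$ and invoke Schatten--H\"older together with $\tr{\omega^k}\le\tr{\omega^2}^{k/2}$; the paper instead works with the non-normal matrix $A\omega$ directly, bounding $\sum_i|\lambda_i|^2\le\tr{(A\omega)(A\omega)^\dagger}\le\|A\|^2\tr{\omega^2}$ via the Schur decomposition and then using $\|\lambda\|_k\le\|\lambda\|_2$. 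Both routes yield $(\|A\|\sqrt{\tr{\omega^2}})^k$ per cycle, so the difference is purely one of packaging---your argument is perhaps the more standard operator-inequality route, while the paper's avoids introducing $\omega^{1/2}$.

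Two minor remarks. First, the paper keeps $A$ (not $A'$) in the traces by absorbing the off-diagonal constraint into the derangement restriction, so it does not pay the factor $\|A'\|\le 2\|A\|$; as written your final bound is $(2q\|A\|\sqrt{\tr{\omega^2}})^q$, a harmless $2^q$ off from the stated constant. Second, the overcounting from repeated indices that you rightly flag is treated with exactly the same informality in the paper's derivation.
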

We thus have the bound
\begin{align} \label{eq:concentrationA}
    \mathrm{Pr}&\left[ \big \vert \langle A(t) \rangle- \overline{ \langle A \rangle}  \big \vert \ge \delta \right] \\ & \le 2e \times  \mathrm{exp} \left(- \frac{\delta}{e \vert \vert A \vert \vert \sqrt{\mathrm{Tr}[\omega^2]} } \right). \nonumber
\end{align}
This states that the probability of finding $\langle A(t)\rangle$ away from $\overline{\langle A \rangle}$ even by a small amount is exponentially suppressed in $\sqrt{\mathrm{Tr}[\omega^2]}$. It also rigorously proves one of the main results from \cite{Srednicki99}, where a similar concentration bound was heuristically argued as a consequence of the ETH.  Previous results \cite{Reimann08,short2011equilibration} only yield the bound
\begin{equation}
    \mathrm{Pr}\left[ \big \vert \langle A(t) \rangle- \overline{\langle A \rangle }  \big \vert \ge \delta \right] \le \frac{\vert \vert A \vert \vert^2 \mathrm{Tr}[\omega^2]}{\delta ^2}.
\end{equation}

A particular observable of interest is the initial state itself, $A= \ket{\Psi} \bra{\Psi}$. In this case, the quantity at hand is the fidelity with the initial state
\begin{equation}
    F(t) = \bra{\Psi} e^{-itH} \ket{\Psi} \bra{\Psi}e^{itH} \ket{\Psi}.
\end{equation}

\begin{theorem} \label{th:genmomentsF}
Let $H$ have a generic spectrum and let $A = \ket{\Psi} \bra{\Psi}$, then
\begin{equation} \label{eq:boundF}
    \mu_q \le \left( q \tr{\omega^2} \right)^q.
\end{equation}
\end{theorem}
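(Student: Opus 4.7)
The plan is to expand the fidelity $F(t)$ in the energy eigenbasis and apply the generic-spectrum hypothesis to each time-averaged moment. Writing $p_m = |\bracket{E_m}{\Psi}|^2$, we have
\begin{equation}
F(t) = \sum_{m,n} p_m p_n\, e^{-i(E_m - E_n) t},
\end{equation}
so that $\overline{F} = \sum_m p_m^2 = \tr{\omega^2}$ and
\begin{equation}
F(t)-\overline{F} = \sum_{m \ne n} p_m p_n\, e^{-i(E_m-E_n)t}.
\end{equation}

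The next step is to raise this to the $q$-th power and exchange the time average with the finite sum, giving
\begin{equation}
\mu_q = \sum_{\substack{m_1,\dots,m_q,\,n_1,\dots,n_q\\ m_k \ne n_k\,\forall k}} \Bigl(\prod_{k=1}^q p_{m_k} p_{n_k}\Bigr)\, \overline{e^{-it\sum_k(E_{m_k}-E_{n_k})}}.
\end{equation}
The time-averaged exponential equals $1$ when $\sum_k E_{m_k}=\sum_k E_{n_k}$ and vanishes otherwise. By Definition \ref{def:genericspec}, applied to the index collections $\{m_k\}_{k=1}^q$ and $\{n_k\}_{k=1}^q$, this identity forces $(n_1,\dots,n_q)$ to be a permutation of $(m_1,\dots,m_q)$. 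For any such matching $\prod_k p_{n_k} = \prod_k p_{m_k}$, and there are at most $q!$ permutations per tuple $\vec m$. Dropping the restriction $m_k \ne n_k$ as an upper bound then yields
\begin{equation}
|\mu_q| \le q!\sum_{m_1,\dots,m_q}\prod_{k=1}^q p_{m_k}^2 = q!\,\bigl(\tr{\omega^2}\bigr)^q \le \bigl(q\,\tr{\omega^2}\bigr)^q,
\end{equation}
using $q! \le q^q$, which is exactly the claim.

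The main subtlety I expect is that Definition \ref{def:genericspec} is phrased for sets, whereas the $q$-fold expansion naturally produces tuples in which indices may repeat. This is handled by first cancelling common elements from the two multisets — these are matched unambiguously thanks to the non-degeneracy of the spectrum implied by genericity — and then applying the definition to the resulting honest sets. A secondary technical point is the interchange of the time average with the $q$-fold sum, which is legitimate because $\omega$ has finite support (and is the same type of swap justified for the $q=1,2$ moments in App. \ref{app:dom converge}). Apart from these points, the argument is a purely combinatorial count, and is noticeably shorter than the proof of Theorem \ref{th:genmoments} because the generic matrix elements $\bra{E_m}A\ket{E_n}$ there are replaced by the simpler nonnegative quantities $p_m p_n$, which also removes the need to absorb factors of $\norm{A}$.
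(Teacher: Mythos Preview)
Your proof is correct and follows essentially the same route as the paper's: expand $F(t)-\overline{F}$ in the energy eigenbasis, use the generic-spectrum condition to reduce the surviving terms of the $q$-fold product to permutations, and then bound the resulting count by $q^q$. The only difference is that the paper keeps the constraint $m_k\neq n_k$ and counts derangements ($!q$) rather than all permutations ($q!$), yielding the slightly sharper identity $\mu_q = {!q}\,\tr{\omega^2}^q$ before applying $!q\le q^q$; your treatment of the multiset-versus-set subtlety in Definition~\ref{def:genericspec} is in fact more careful than the paper's.
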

Notably, assuming a generic spectrum, the average fidelity is $\overline{F} = \tr{\omega^2}$, so that we have the concentration bound
\begin{align}\label{eq:concentrationF}
  \mathrm{Pr}\left[ \big \vert F(t)-\mathrm{Tr}[\omega^2]  \big \vert \ge \delta \right] \le 2e \times \mathrm{exp} \left(- \frac{\delta}{e \mathrm{Tr}[\omega^2] } \right).
\end{align}
This improves on Eq. \eqref{eq:concentrationA} by a factor of $\sqrt{\tr{\omega^2}}$ when substituted into $A = |\Psi\rangle \langle \Psi|$ and $||A|| = 1$. Eq. \eqref{eq:boundF} appeared previously in \cite{Campos_Venuti_2010}.

It is well known that $\tr{\omega^2}$ is exponentially suppressed in system size for generic models for sufficiently well behaved initial conditions. As an example, see figure \ref{fig:purity}, which shows a clear exponential decay of $\tr{\omega^2}$ with system size $L$.

\begin{figure}[h!]
\centering
\includegraphics[width=\linewidth]{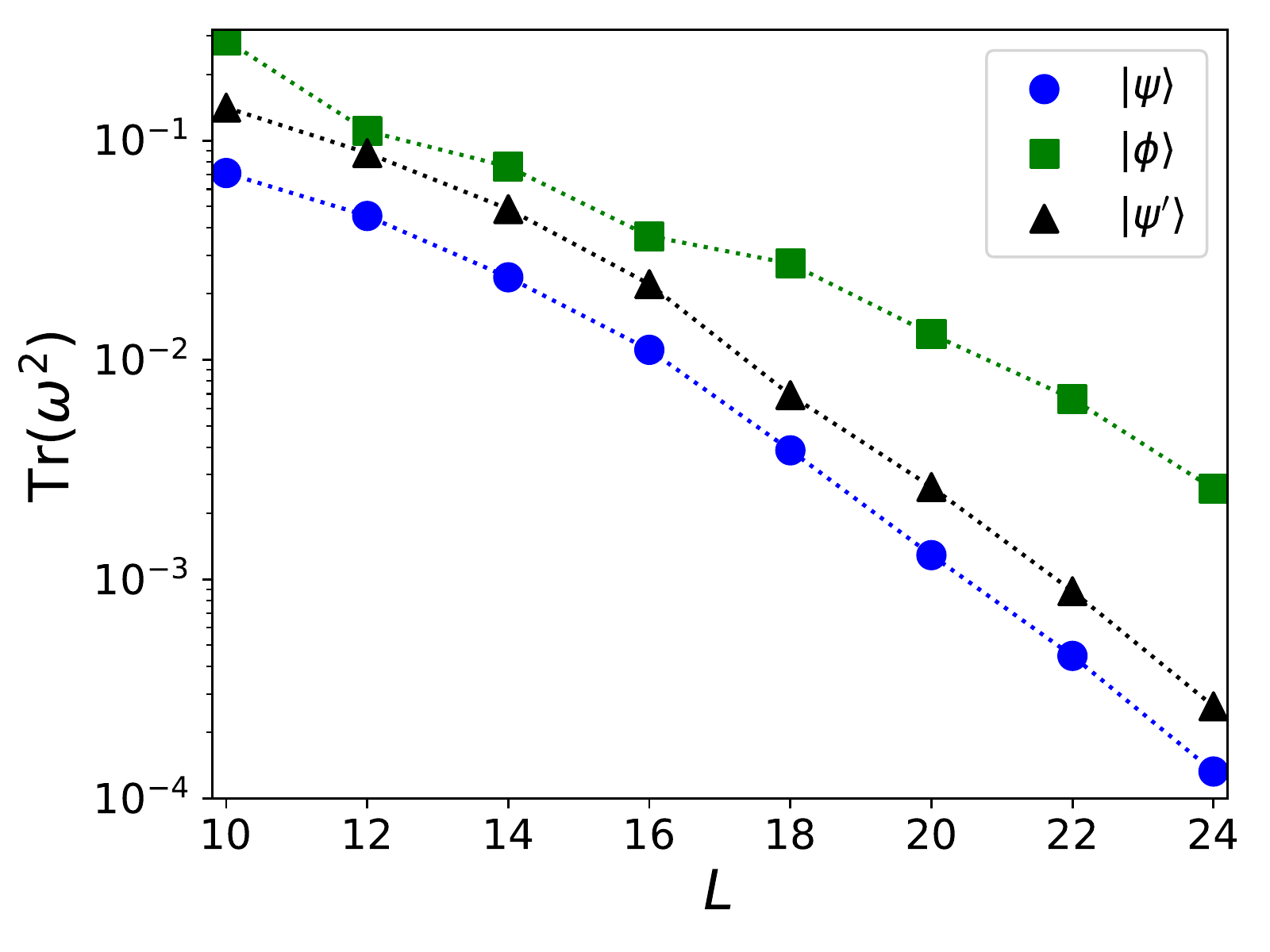}
\caption{ $\tr{\omega^2}$ for a variety of system sizes and states. Numerics were done with the same model as Fig. \ref{fig:hist} , which is non-integrable. The three states studied are $|\psi\rangle  \coloneqq |\uparrow\downarrow\uparrow\downarrow\dots ..\rangle$, $|\psi'\rangle \coloneqq \frac{1}{\sqrt{2}}\left( |\uparrow\downarrow\uparrow\downarrow\dots ..\rangle+|\downarrow \uparrow\downarrow\uparrow\dots ..\rangle\right)$ and $ |\phi \rangle \coloneqq \frac{1}{\sqrt{L}}\sum_{r=0}^{L-1} \hat{T}^r|\uparrow\uparrow \dots \uparrow \downarrow \dots \downarrow \downarrow \rangle$ where $\hat{T}$ is the translation operator shifting lattice indices by one. More details on the states and the model can be found in App. \ref{app:numerics}. }
\label{fig:purity}
\end{figure}

%From this, the concentration bound follows in the same way as above by identifying $g(N) =  \tr{\omega^2}$.

\section{Free fermions}
The second class that we consider are extended free fermionic models

\begin{equation} \label{eq:free}
    H = \sum_{m,n=1}^L M_{m,n} f_m^\dagger f_n,
\end{equation}
where $f_n$ is a fermionic annihilation operator for the lattice site $n$. The fermionic operators obey the standard canonical anti-commutation relations $  \{f_m,f_n\} = \{f_m^\dagger,f_n^\dagger\} =0 ,\enspace \{f_m^\dagger,f_n\} = \delta_{m,n}$. We assume $M$ is real symmetric, so it is diagonalized with a real orthogonal matrix $O$ such that $M = O D O^T$. $D$ is a diagonal matrix with entries $D_{k,k} = \epsilon_k$, which allows us to rewrite the Hamiltonian as

\begin{equation} \label{eq:freesolved}
    H = \sum_{k=1}^L \epsilon_k d_k^\dagger d_k,
\end{equation}
where $\epsilon_k$ are the single particle energy eigenmodes, and we have new fermionic operators in eigenmode space defined in terms of the real space fermionic operators: $ d_k = \sum_{j=1}^L O_{j,k} f_j$. This class of models notably does not obey Def. \ref{def:genericspec}. However, we can instead give the following definition.

\begin{definition} \label{def:genericfree}
Let $H = \sum_{k=1}^L \epsilon_k d_k^\dagger d_k$ be a free Hamiltonian.
Let $\Lambda_q,\Lambda'_q$ be two arbitrary sets of $q$ eigenmodes $\{ \epsilon_{j} \}$. Then $H$ is an extended free fermionic model with a generic spectrum if, for all $q \in \mathbb{N}$ and all $\Lambda_q,\Lambda'_q$, the equality
\begin{equation}
    \sum_{j \in \Lambda_q } \epsilon_{j} =    \sum_{j \in \Lambda'_q } \epsilon_{j}
\end{equation}
implies that $\Lambda_q=\Lambda'_q$ and the entries of $O$ are such that

\begin{equation}
    O_{j,k} = \frac{c_{j,k}}{\sqrt{L}},
\end{equation}
with $c_{j,k} = \mathcal{O}(1)$.
\end{definition}
Generic free extended models can be constructed relatively easily by imposing translation invariance. Such models always have $O_{j,k} \propto 1/\sqrt{L}$. One can then construct an eigenmode spectrum $\epsilon_k$ that is generic. It would be interesting for future work to find local, or approximately local examples of such a model. 

The equivalent of Lemma \ref{lem:HermitianMatrices} also holds here by applying it to the matrix $M$ and the energy eigenmodes. This definition crucially excludes localized models, which have entries of the form, $O_{m,k} \sim e^{-|k-m|/\xi}$, with $\xi$ the localization length. The bound on the moments is as follows.

\begin{theorem} \label{th:genmomentsFF}
Let $H$ be an extended free fermionic Hamiltonian with generic spectrum and let $A =  f_m^\dagger f_n$. Then, for even $q$,
\begin{equation}
    \kappa_{q} \leq \left(   qc^2 \sqrt{\frac{\nu}{L}}    \right)^{q},
\end{equation}
where $\nu = \frac{N}{L}$ is the filling factor of the fermions on the lattice and $c = \sqrt{L} \max_{k_j} \{ O_{m,k_j}, O_{n,k_j}\} $.

\end{theorem}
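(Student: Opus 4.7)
The plan is to exploit the free-fermion structure to reduce $\mu_q$ to a sum of traces of powers of a single-particle matrix, and then bound those traces by Schatten norms.

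First, diagonalize via $f_m = \sum_k O_{m,k} d_k$ to write
\begin{equation}
\langle A(t) \rangle = \sum_{k,k'} O_{m,k}\, O_{n,k'}\, \rho_{k,k'}\, e^{i(\epsilon_k - \epsilon_{k'})t},
\end{equation}
where the mode-basis two-point function $\rho_{k,k'} \equiv \bra{\Psi} d_k^\dagger d_{k'} \ket{\Psi}$ is positive semidefinite with spectrum in $[0,1]$ and $\tr{\rho}=N$. Introduce the diagonal matrices $(\Omega_a)_{kk} = O_{a,k}$ and the single-particle observable matrix $B \equiv \Omega_m\, \rho\, \Omega_n$. Then $\|B\|_\infty \le \|\Omega_m\|_\infty \|\rho\|_\infty \|\Omega_n\|_\infty \le c^2/L$ and, using $\|XYZ\|_1 \le \|X\|_\infty \|Y\|_1 \|Z\|_\infty$, $\|B\|_1 \le (c/\sqrt{L})\, N\, (c/\sqrt{L}) = c^2 N/L$. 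The fluctuating part is $f(t) - \overline{f} = \sum_{k \ne k'} B_{k,k'}\, e^{i(\epsilon_k - \epsilon_{k'})t}$.

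Second, expand $\mu_q$ as a sum over tuples $(k_i, k_i')_{i=1}^q$ with $k_i \ne k_i'$. By Definition \ref{def:genericfree}, the time average of the phase vanishes unless the multisets $\{k_i\}$ and $\{k_i'\}$ coincide. For pairwise distinct $k_i$ (the dominant regime), this forces $k_i' = k_{\sigma(i)}$ for a unique $\sigma \in S_q$, which the off-diagonal constraint turns into a derangement. Decomposing $\sigma$ into cycles of lengths $\ell_1,\ldots,\ell_p$ (each $\ell_j \ge 2$, $\sum_j \ell_j = q$) and dropping the distinctness constraint (an overestimate consistent with keeping the product of signed traces), the sum over indices factorizes:
\begin{equation}
\sum_{\vec k \in [L]^q} \prod_{i=1}^q B_{k_i, k_{\sigma(i)}} \;=\; \prod_{j=1}^p \tr{B^{\ell_j}}.
\end{equation}

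Third, apply Hölder for Schatten norms: $|\tr{B^\ell}| \le \|B\|_1 \|B\|_\infty^{\ell-1} \le (c^2/L)^\ell\, N$ for every $\ell \ge 1$. Hence each derangement with $p$ cycles contributes at most $(c^2/L)^q\, N^p$, and since $p \le q/2$ (cycles have length $\ge 2$) this is uniformly bounded by $(c^2/L)^q\, N^{q/2}$. Summing over the at most $|D_q| \le q!$ derangements yields
\begin{equation}
|\mu_q| \;\le\; q!\, (c^2/L)^q\, N^{q/2} \;=\; q!\,\bigl(c^2 \sqrt{\nu/L}\bigr)^q \;\le\; \bigl(q c^2 \sqrt{\nu/L}\bigr)^q,
\end{equation}
which is the claimed bound.

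The main obstacle is the combinatorial bookkeeping for tuples with repeated indices, for which the derangement parametrization becomes many-to-one. I would handle this uniformly by encoding each surviving tuple as an Eulerian directed multigraph on $[L]$ whose every vertex has balanced in- and out-degrees, then decomposing it into simple directed cycles each of length $\ge 2$; the per-cycle sum still produces a factor $\tr{B^{\ell}}$ bounded as above, and the residual combinatorial multiplicity remains dominated by $q!$, leaving the final estimate intact. A secondary subtlety is that $A = f_m^\dagger f_n$ is non-Hermitian, so $f(t)$ is complex-valued; the argument controls $|\mu_q|$, which is the quantity feeding the Markov-type tail bound of Lemma \ref{th:Apr}.
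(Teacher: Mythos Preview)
Your proof is correct and follows essentially the same route as the paper: expand the moment, use the generic single-particle spectrum to reduce the time average to a sum over derangements, factor each derangement into cycle traces bounded by $(c^2/L)^{\ell}N$, and sum over the at most $q!$ derangements. The only cosmetic difference is that you package the per-cycle bound via Schatten/H\"older on the single matrix $B=\Omega_m\rho\,\Omega_n$, whereas the paper first strips off all $O$-factors uniformly and then bounds the residual $\tr{\Lambda^{\ell}}\le N$ for the correlation matrix $\Lambda$; your explicit flagging of the repeated-index bookkeeping is, if anything, more careful than the paper's own treatment, which writes the derangement sum without comment.
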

The corresponding concentration bound is
\begin{align} \label{eq:concentrationff}
    \mathrm{Pr}&\left[ \big \vert \langle  f_m^\dagger f_n(t) \rangle- \overline{\langle  f_m^\dagger f_n \rangle }  \big \vert \ge \delta \right] \\ & \le 2e \times  \mathrm{exp} \left(- \frac{\delta}{e c^2  } \sqrt{\frac{L}{\nu}} \right). \nonumber
\end{align}
Theorem \ref{th:genmomentsFF} can be contrasted with the bound found in  \cite{Venuti2013} for the second moment. The authors consider a potentially extensive observable and do not limit the analysis to extensive models, recovering $\kappa_2 \leq ||a||^2 \nu L$ for an observable   $A = \sum_{m,n} f^\dagger_m a_{m,n} f_n$.

The last quantity of interest is the single particle propagator

\begin{align}
    \{f_m^\dagger(t),f_n\} = a_{m,n}(t).
\end{align}
For example, if we initialize our state as $|\Psi \rangle = f_m^\dagger |0\rangle$, then the fidelity is 

\begin{equation}\label{eq:ferfid}
    F(t) = |a_{m,m}(t)|^2. 
\end{equation}

The more general $|a_{m,n}(t)|^2$ is also studied in the context of out of time ordered correlators \cite{Riddell2019,Riddell2020,Xu2020,Khemani2018}. Consider

\begin{equation}
    |a_{m,n}(t)|^2 = \sum_{k,l} O_{m,k} O_{n,k} O_{m,l} O_{n,l} e^{i(\epsilon_k-\epsilon_l)t}.
\end{equation}
The infinite time average of this quantity is taken as
\begin{equation}
    \omega_{m,n} = \sum_{k} O_{m,k}^2 O_{n,k}^2.
\end{equation}
For extended models with non-degenerate frequencies this quantity decays to zero since $\omega_{m,n} \sim \frac{1}{L}$ and $\kappa_2 \leq \frac{c}{L^2} $, where $c$ is weakly dependent on system size and is $\mathcal{O}(1)$ in the thermodynamic limit  \cite{Riddell2020}. We can bound the moments for the single particle propagator as follows.

\begin{theorem}  \label{th:genmomentsFFF}
   Let $H$ be a free fermionic Hamiltonian with a generic spectrum, and let the dynamical  function $f(t) = |a_{m,n}(t)|^2$ be the squared single particle propagator. The moments are then bounded by
    \begin{equation}
        \kappa_{q} \leq \left( \frac{q c^4}{L}\right)^q, 
    \end{equation}
    where $c = \sqrt{L} \max_{k_j} \{ O_{m,k_j}, O_{n,k_j}\} $. 
\end{theorem}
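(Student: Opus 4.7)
} My plan is to mirror the strategy presumably used for the fidelity moments in Theorem \ref{th:genmomentsF}, now applied to the complex amplitude $a_{m,n}(t)$. Writing $A_k \equiv O_{m,k} O_{n,k}$, we have $a_{m,n}(t) = \sum_k A_k e^{i \epsilon_k t}$, so $f(t)=|a_{m,n}(t)|^2$ has infinite-time average $\overline{f} = \omega_{m,n} = \sum_k A_k^2$, and the centered function takes the form
\begin{equation*}
    f(t) - \overline{f} \;=\; \sum_{k \ne l} A_k A_l\, e^{i(\epsilon_k - \epsilon_l)t}.
\end{equation*}
This puts the problem into essentially the same structural shape as the fidelity, the only new feature being that the coefficients $A_k$ can have either sign.

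The next step is to expand the $q$-th power, collecting pairs $(k_j,l_j)_{j=1}^{q}$ with $k_j \ne l_j$, and to time-average. By the generic free-fermion condition (Def.~\ref{def:genericfree}), the surviving terms are exactly those for which the multisets $\{k_1,\ldots,k_q\}$ and $\{l_1,\ldots,l_q\}$ coincide. The crucial observation — which replaces the manifest positivity of the $p_k$'s used in the fidelity proof — is that whenever these multisets agree, one has
\begin{equation*}
    \prod_{j=1}^q A_{k_j} A_{l_j} \;=\; \Bigl(\prod_{j=1}^q A_{k_j}\Bigr)^2 \;\ge\; 0,
\end{equation*}
since the product depends only on the underlying multiset. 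Thus every surviving summand is non-negative, and we may safely drop the constraint $k_j \ne l_j$ to obtain an upper bound. Grouping by multisets $M = \{a_1^{n_1},\ldots,a_s^{n_s}\}$ with $\sum_i n_i = q$, and noting that the number of ordered realizations of $M$ is $q!/(n_1! \cdots n_s!)$, we arrive at
\begin{equation*}
    \mu_q \;\le\; \sum_{M} \Bigl(\frac{q!}{n_1! \cdots n_s!}\Bigr)^{\!2} \prod_{i=1}^s A_{a_i}^{2 n_i}.
\end{equation*}

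To close the estimate, I would use the elementary bound $q!/(n_1!\cdots n_s!) \le q!$ on one of the two multinomial factors; the remaining factor combines via the multinomial theorem into $\bigl(\sum_k A_k^2\bigr)^q$, giving
\begin{equation*}
    \mu_q \;\le\; q!\, \Bigl(\sum_k A_k^2\Bigr)^{\!q} \;\le\; q^q\, \omega_{m,n}^{\,q}.
\end{equation*}
Finally, the extensivity hypothesis $|O_{m,k}|,|O_{n,k}| \le c/\sqrt{L}$ immediately yields
\begin{equation*}
    \omega_{m,n} \;=\; \sum_k O_{m,k}^2 O_{n,k}^2 \;\le\; \frac{c^4}{L},
\end{equation*}
and combining the two inequalities produces the claimed $\mu_q \le (q c^4/L)^q$.

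I expect the conceptual crux to be the positivity observation in the middle paragraph: without it, one would be forced to put absolute values on the $A_k$'s, weakening the extensivity input. The rest — the generic-spectrum reduction to multiset-matched sums, the squared-to-single multinomial trick, and the $\ell^\infty$ bound on the $O_{m,k}$'s — is bookkeeping of the same flavour as the fidelity case, so I would expect no genuine new obstacles there.
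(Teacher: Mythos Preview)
Your proof is correct, but it takes a more elaborate route than the paper's. The paper simply bounds each $|O_{m,k}|,|O_{n,k}|$ by $c/\sqrt{L}$, so that every surviving term after the time average satisfies $\bigl|\prod_i O_{m,k_i}O_{n,k_i}O_{m,\sigma(k_i)}O_{n,\sigma(k_i)}\bigr|\le (c^4/L^2)^q$; summing over the $L^q$ choices of $(k_1,\ldots,k_q)$ and the $!q$ derangements gives $\mu_q \le\, !q\,(c^4/L)^q \le (qc^4/L)^q$ in two lines. Your positivity observation (matching multisets force a perfect square) is elegant and yields the intermediate estimate $\mu_q \le q!\,\omega_{m,n}^{\,q}$, which holds without any extensivity assumption on $O$ and could be sharper for models where $\omega_{m,n}\ll c^4/L$. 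However, your concern that taking absolute values would ``weaken the extensivity input'' turns out to be unfounded here: the crude $\ell^\infty$ bound on the $O$'s already reproduces the claimed scaling, and with the slightly better combinatorial factor $!q$ rather than your $q!$. So the paper's argument is shorter; yours retains the exact $\omega_{m,n}$ longer and would be the preferable route if one wanted to relax or refine the extensivity hypothesis.
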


Finally, the corresponding concentration bound is
\begin{align} \label{eq:concentrationff}
    \mathrm{Pr}&\left[ \vert   |a_{m,n}(t)|^2 -  \omega_{m,n}  \big \vert \ge \delta \right] \le 2e \times  \mathrm{exp} \left(- \frac{\delta L}{e c^4  } \right).
\end{align}

%%%%%%%
%%%%%%%

\section{Recurrence time}
All the quantities analyzed above always come back arbitrarily close to their initial values at $t=0$. For large systems, however, such recurrences only happen at astronomically large timescales, inaccessible to both experiments and numerical studies. We now put a lower bound on those timescales through a suitably defined notion of \emph{average recurrence time}, both for observables and also the whole state.

\begin{definition}
A $(u, \Delta,A)$-recurrence occurs at a time interval $\mathcal{C}_\Delta =[t_\Delta, t_\Delta+\Delta]$  if, for all $t \in \mathcal{C}_\Delta$,
\begin{equation}
   \vert  \langle A (t )\rangle - \langle A(0) \rangle \vert \le u \norm{A},
\end{equation}
where $\Delta>0$ and $u \in (0,2]$.
Similarly, a $(u, \Delta)$-recurrence occurs if,  for all $t \in \mathcal{C}_\Delta$,
\begin{equation}
   1- F(t) \le u.
\end{equation}
\end{definition}
We identify a recurrence with an interval of time during which the system is close to the initial value. Thus, the definition contains two free parameters: $\Delta$ is the time duration of that recurrence, and $u$ quantifies how close to the initial state that is. We expect that the larger $\Delta$ and the smaller $u$ are, the more seldom a $(u, \Delta,A)$-recurrence will happen. In fact, for some large enough $\Delta$, a $(u, \Delta,A)$-recurrence will likely never happen.

Notice that it follows from the Fuchs-van de Graafs inequalities \cite{fuchs1998cryptographic} that an $(u, \Delta)$-recurrence implies an $(u, \Delta,A)$-recurrence for all $A$, and that conversely an $(u, \Delta,A)$-recurrence for all $A$ implies an $(u, \Delta)$-recurrence. However, individual observables may have additional earlier recurrences than those of the fidelity. 

Let us also define $t^n_\Delta(A)$ as the initial point of the time interval corresponding to the $n$-th $(u, \Delta,A)$-recurrence, so that $t^n_\Delta(A) < t^{n+1}_\Delta(A)$, and analogously, $t^n_\Delta$ for the fidelity  recurrences. This motivates the following definition, inspired by that in \cite{Campos_Venuti_Recurrence}.
\begin{definition}\label{def:Tu}
The average $(u, \Delta,A)$-recurrence time is
\begin{equation}
 T(u, \Delta,A)  \equiv  \lim_{n \rightarrow \infty} \frac{t_\Delta^n(A)}{n},
\end{equation}
with $T(u, \Delta)$ analogously defined.
\end{definition}

 $T(u,\Delta,A)$  can be understood as the inverse of the density of recurrences. To see this, first note that $\langle A(t) \rangle$ is a Besicovitch  almost periodic function and we can therefore choose a well-defined almost period $\tilde{P}$ in the sense of \cite{besicovitch1926generalized}.
Evaluating our sequence at integer multiples of the almost period gives $\frac{t_\Delta^n(A)}{n} \rightarrow \frac{\tilde{P}}{n_{\tilde{P}}}$ 
where $n_{\tilde{P}}$ is the number of $t_{\Delta}^n(A)$ inside an almost period. Then it follows that $T(u, \Delta,A) = 1/D(u,\Delta,A)$ where $D(u,\Delta,A)$ is the density of $(u,\Delta,A)$ recurrences.    
These quantities can be easily bounded with the concentration bounds above. First, for $T(u, \Delta,A)$.

\begin{corollary}\label{co:recurrenceA}
Let $H$ have a generic spectrum, and let w.l.o.g. $\langle A(0) \rangle - \overline{\langle A \rangle } = c_A \norm{A} \ge 0$. Then, for $u \le c_A$,
\begin{equation}\label{eq:rec1}
  \frac{\Delta}{2e}  \mathrm{exp}\left(\frac{c_A-u}{e \sqrt{\mathrm{Tr}[\omega^2]} } \right) \le  T(u,\Delta,A) . 
\end{equation}
\end{corollary}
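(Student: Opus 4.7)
The plan is to exploit the identity $T(u,\Delta,A) = 1/D(u,\Delta,A)$ noted after Definition~\ref{def:Tu}, where $D(u,\Delta,A)$ is the density of recurrences, and then to upper bound $D$ via the concentration inequality in Eq.~\eqref{eq:concentrationA}. The geometric intuition is that every recurrence is an interval of length at least $\Delta$ on which $\langle A(t)\rangle$ is forced to remain far from the equilibrium value $\overline{\langle A\rangle}$, and such far-from-equilibrium events already carry exponentially small total probability measure.

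First I would translate the recurrence condition into a one-sided deviation from equilibrium. Using the hypothesis $\langle A(0)\rangle - \overline{\langle A\rangle} = c_A\norm{A}$ together with the triangle inequality applied to $|\langle A(t)\rangle - \langle A(0)\rangle|\le u\norm{A}$, any $t$ inside a $(u,\Delta,A)$-recurrence must satisfy
\begin{equation}
\langle A(t)\rangle - \overline{\langle A\rangle} \ge (c_A - u)\norm{A},
\end{equation}
which is nonnegative by the assumption $u \le c_A$. Hence the set $\mathcal{R}\subseteq[0,\infty)$ of times belonging to some recurrence is contained in $\{t:\langle A(t)\rangle-\overline{\langle A\rangle}\ge (c_A-u)\norm{A}\}$.

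Next I would bound the density of $\mathcal{R}$ using Theorem~\ref{th:genmoments}. Applying Eq.~\eqref{eq:concentrationA} with $\delta=(c_A-u)\norm{A}$ shows that the long-time fraction of time spent in $\mathcal{R}$ is at most $2e\exp\!\left(-(c_A-u)/(e\sqrt{\tr{\omega^2}})\right)$. On the other hand, the first $n$ recurrences contribute disjoint subintervals of $\mathcal{R}\cap[0,t_\Delta^n(A)]$, each of length at least $\Delta$, so that $n\Delta/t_\Delta^n(A) \le |\mathcal{R}\cap[0,t_\Delta^n(A)]|/t_\Delta^n(A)$. Taking $n\to\infty$, the left-hand side converges to $\Delta/T(u,\Delta,A)$ by Definition~\ref{def:Tu}, while the right-hand side converges to the probability bound above. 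Rearranging gives exactly Eq.~\eqref{eq:rec1}.

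The main technical subtlety is justifying the interchange of the long-time limit with the empirical fraction of time spent in the high-deviation set, i.e.\ showing it really converges to the probability given by $P(x)$ in Eq.~\eqref{eq:probx}. This follows from the Besicovitch almost-periodic nature of $\langle A(t)\rangle$, which is already invoked after Definition~\ref{def:Tu} and guarantees that time averages of bounded Borel functions of $\langle A(t)\rangle$ converge and commute with the $T\to\infty$ limit; the remaining ingredients are a direct application of the already-established concentration result.
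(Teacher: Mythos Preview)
Your proposal is correct and follows essentially the same route as the paper's proof: both argue that any $t$ inside a recurrence satisfies $|\langle A(t)\rangle-\overline{\langle A\rangle}|\ge(c_A-u)\norm{A}$, bound the long-time fraction of such $t$ by the concentration inequality~\eqref{eq:concentrationA} with $\delta=(c_A-u)\norm{A}$, identify that fraction with $\Delta/T(u,\Delta,A)$ via Definition~\ref{def:Tu}, and rearrange. Your added remark on the Besicovitch almost-periodicity justifying the limit interchange is a point the paper leaves implicit.
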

\begin{proof}
From the definition of the distribution $P(x)$ in Eq. \eqref{eq:probx} and Eq. \eqref{eq:concentrationA} we have that
\begin{align}
     \lim_{n \rightarrow \infty} &\frac{\Delta n}{t_\Delta^n(A) + \Delta} \le \mathrm{Pr}\left[  \vert  \langle A (t )\rangle - \langle A(0) \rangle \vert \le u \norm{A} \right] \nonumber \\ \nonumber & \le \mathrm{Pr}\left[ \vert  \langle A (t )\rangle - \overline{\langle A \rangle } \vert \ge (c_A-u) \norm{A} \right]
     \\ & \le 2e \times \mathrm{exp}\left(\frac{u-c_A}{e \sqrt{\tr{\omega^2}} } \right).
\end{align}
Notice that $  \lim_{n \rightarrow \infty} \frac{\Delta n}{t_\Delta^n(A) + \Delta}  = \Delta/T(u,\Delta,A)$. Solving for $T(u,\Delta,A)$ yields the result.
\end{proof}
The previous results are stated in terms of arbitrary recurrences, defined with free parameters $\Delta,u$. We now explain for which choices of these we expect the bounds to be more meaningful. With typical out-of-equilibrium initial conditions, we have that $c_A=\mathcal{O}(1)$. In that case, meaningful recurrences, comparable in magnitude to the initial condition, will be such that $c_A-u=\mathcal{O}(1)$. We expect that in those cases, the recurrences will have a duration comparable to that of the initial equilibration time $T^A_\text{eq}$, which we can loosely define as the time it takes for $\langle A(t) \rangle$ to initially settle around the steady value $\overline{\langle A \rangle }$.  These recurrences are then on average spaced by a time which we roughly estimate to be
\begin{equation}\label{eq:estRA}
    T  \gtrsim T^A_\text{eq}  e^{ \Omega\left( \tr{\omega^2}^{-1/2} \right)}.
\end{equation}
 See \cite{Goldstein13,Malabarba14,Garcia-PintosPRX2017,de2018equilibration,wilming2018equilibration} for analytical bounds on $T^A_\text{eq}$.  Note that for local Hamiltonians and observables, $T^A_\text{eq}$ is believed to generally scale as a low-degree polynomial in system size \cite{d2016quantum}.

 A possible shortcoming of Eq. \eqref{eq:rec1} is that the bound vanishes when $\Delta \rightarrow 0$. That is, Corollary \ref{co:recurrenceA} is unable to estimate vanishingly small recurrences. However, we expect that meaningful recurrences, getting somewhat close to the expectation value of initial conditions, will typically have a relaxation time comparable to that of the initial conditions, in which case the estimate of Eq. \eqref{eq:estRA} applies.

For the fidelity, the bound on the recurrence follows exactly the proof of Corollary \ref{co:recurrenceA} but using Eq. \eqref{eq:concentrationF} instead.
\begin{corollary} \label{co:recurrenceF}
Let $H$ have a generic spectrum. Then,
\begin{equation}\label{eq:recT}
 \frac{\Delta}{2e^2} \mathrm{exp} \left( \frac{1-u}{e \mathrm{Tr}[\omega^2] } \right) \le T(u,\Delta).
\end{equation}
\end{corollary}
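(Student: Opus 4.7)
The plan is to mirror the proof of Corollary~\ref{co:recurrenceA} almost verbatim, replacing the observable expectation $\langle A(t) \rangle$ by the fidelity $F(t)$ and invoking the sharper concentration bound Eq.~\eqref{eq:concentrationF} in place of Eq.~\eqref{eq:concentrationA}. The central observation, already established in that earlier proof, is that the density of recurrences satisfies
\begin{equation*}
  \frac{\Delta}{T(u,\Delta)} = \lim_{n \to \infty} \frac{\Delta n}{t_\Delta^n + \Delta} \le \mathrm{Pr}\!\left[\, 1 - F(t) \le u \,\right],
\end{equation*}
since each $(u,\Delta)$-recurrence contributes at least one length-$\Delta$ interval on which the condition holds, and the probability is evaluated against the distribution $P(x)$ of Eq.~\eqref{eq:probx} for $f(t) = F(t)$.

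Next I would rewrite the recurrence event $1 - F(t) \le u$ as $F(t) \ge 1-u$, and then use that $\overline{F} = \tr{\omega^2}$ under the generic-spectrum assumption to recast it as the one-sided deviation $F(t) - \overline{F} \ge 1 - u - \tr{\omega^2}$. In the non-trivial regime $1 - u - \tr{\omega^2} > 0$, this event is contained in the two-sided deviation event to which Eq.~\eqref{eq:concentrationF} applies with $\delta = 1 - u - \tr{\omega^2}$. Chaining the inequalities gives
\begin{equation*}
  \frac{\Delta}{T(u,\Delta)} \le 2e \, \exp\!\left(-\frac{1 - u - \tr{\omega^2}}{e\, \tr{\omega^2}}\right) = 2e \cdot e^{1/e} \exp\!\left(-\frac{1-u}{e\, \tr{\omega^2}}\right),
\end{equation*}
and rearranging for $T(u,\Delta)$ while loosely absorbing $e^{1/e} \le e$ into the prefactor yields the stated constant $1/(2e^2)$.

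The only step requiring care, as in Corollary~\ref{co:recurrenceA}, is the first inequality: relating the limit of the recurrence density to a probability with respect to $P(x)$. This is the Besicovitch almost-periodic argument sketched right after Definition~\ref{def:Tu}, now applied to $F(t)$ in place of $\langle A(t) \rangle$. Since that reasoning is agnostic to the specific almost-periodic function at hand, it transfers directly and no genuinely new obstacle arises beyond what was already overcome in the observable case.
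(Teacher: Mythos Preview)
Your proposal is correct and follows exactly the approach the paper indicates: mirror the proof of Corollary~\ref{co:recurrenceA} with $F(t)$ in place of $\langle A(t)\rangle$ and invoke Eq.~\eqref{eq:concentrationF} with $\delta = 1-u-\tr{\omega^2}$. Your explicit bookkeeping of the extra factor $e^{1/e}\le e$, which turns the prefactor $1/(2e)$ into $1/(2e^2)$, is precisely what the paper leaves implicit.
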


Again, there are particular choices of the free parameters for which we expect the bounds to be more meaningful.
In many-body systems, the fidelity initially decays as $F(t)= e^{-\sigma^2 t^2/2}$ where $ \sigma^{2}= \bra{\Psi} H^2 \ket{\Psi} - \bra{\Psi} H \ket{\Psi}^2$ is the energy variance \cite{Torres_Herrera_2014,Torres_Herrera_20142, Alhambra_2020}.
Recurrences with $u =\mathcal{O}(1)$, in which the time-evolved state reaches a meaningful fidelity with the initial one, likely decay in a similar fashion, and as such we expect them to have a time duration of $\Delta \sim \sigma^{-1}$, so that on average they are spaced by a time
\begin{equation}\label{eq:estRF}
    T \gtrsim  \sigma^{-1} e^{ \Omega\left( \tr{\omega^2}^{-1} \right)}.
\end{equation}

% Notice that in the limiting cases of no dynamics our result is naturally worse than the trivial bound $u \le T(u,\Delta,A), T(u,\Delta,)$.
%Maybe we dont need this comment

This resembles the result in \cite{Campos_Venuti_Recurrence}, which gives an exact calculation of the average recurrence time based on some heuristic assumptions on the wave-function, and finds a similar scaling of $e^{ \Omega\left( \tr{\omega^2}^{-1} \right)}$, with a slightly different prefactor. It also matches the scaling of other previous estimates \cite{Bhattacharyya1986}, so Eq. \eqref{eq:recT} should be close to optimal.

Finally, we also have corresponding bounds for fermions.

\begin{corollary}
 Let $H$ be a free fermionic Hamiltonian with generic spectrum, and let w.l.o.g. $\langle f_m^\dagger f_n(0) \rangle - \overline{\langle f_m^\dagger f_n \rangle }= c_f \ge 0$. Then, for $u \le c_f$, \begin{equation}\label{eq:recTf}
 \frac{\Delta}{2e} \mathrm{exp} \left( \frac{c_f-u}{e c^2  } \sqrt{\frac{L}{\nu}} \right) \le T(u,\Delta,f_m^\dagger f_n),
\end{equation}
\end{corollary}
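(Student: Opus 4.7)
The plan is to mirror the proof of Corollary \ref{co:recurrenceA} step-for-step, substituting the free-fermion concentration bound that follows from Theorem \ref{th:genmomentsFF} in place of Eq. \eqref{eq:concentrationA}. Every analytic ingredient is already available.

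First, I would rewrite the average recurrence time as the inverse density of recurrence intervals. From Definition \ref{def:Tu} and the remarks that follow it, $\Delta/T(u,\Delta,f_m^\dagger f_n) = \lim_{n \to \infty} \Delta n / (t_\Delta^n(f_m^\dagger f_n) + \Delta)$, and this equals the long-time fraction of times lying inside a $(u,\Delta,f_m^\dagger f_n)$-recurrence interval. Under the measure $P(x)$ from Eq. \eqref{eq:probx}, that fraction is bounded above by $\mathrm{Pr}\bigl[\,|\langle f_m^\dagger f_n(t)\rangle - \langle f_m^\dagger f_n(0)\rangle| \le u\,\bigr]$, using $\norm{f_m^\dagger f_n} = 1$.

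Second, I would convert this near-initial-value event into a tail event that Theorem \ref{th:genmomentsFF} can control. The reverse triangle inequality, together with $c_f = \langle f_m^\dagger f_n(0)\rangle - \overline{\langle f_m^\dagger f_n\rangle} \ge 0$ and the hypothesis $u \le c_f$, implies that every such $t$ satisfies $|\langle f_m^\dagger f_n(t)\rangle - \overline{\langle f_m^\dagger f_n\rangle}| \ge c_f - u$. Feeding this into the free-fermion concentration bound yields
\begin{equation}
\frac{\Delta}{T(u,\Delta,f_m^\dagger f_n)} \le 2e\,\mathrm{exp}\!\left(-\frac{c_f-u}{ec^2}\sqrt{\frac{L}{\nu}}\right),
\end{equation}
and rearranging gives the stated lower bound.

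Since the argument is essentially \emph{template substitution}, there is no genuine obstacle; all the analytic work is carried by Lemma \ref{th:Apr} and Theorem \ref{th:genmomentsFF}. Two minor points still deserve a sanity check. First, for $m\ne n$ the operator $f_m^\dagger f_n$ is non-Hermitian and $\langle f_m^\dagger f_n(t)\rangle$ is complex-valued, so the recurrence condition and the tail bound must both be read in terms of the complex modulus, consistent with how the moments $\kappa_q$ are bounded in Theorem \ref{th:genmomentsFF} (the Lemma \ref{th:Apr} argument passes through by applying the one-sided moment bound to real and imaginary parts separately). Second, the bound degenerates as $u \to c_f$, so as in the discussion after Corollary \ref{co:recurrenceA}, physically meaningful estimates require the gap $c_f-u$ to be of order unity on the scale $c^2\sqrt{\nu/L}$, and $\Delta$ to be at least comparable to the equilibration time of $\langle f_m^\dagger f_n(t)\rangle$.
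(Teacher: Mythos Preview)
Your proposal is correct and follows essentially the same approach as the paper: the paper gives no separate proof for this corollary, leaving it implicit that one reruns the proof of Corollary~\ref{co:recurrenceA} verbatim with the free-fermion concentration bound \eqref{eq:concentrationff} (from Theorem~\ref{th:genmomentsFF}) substituted for Eq.~\eqref{eq:concentrationA}, which is exactly your template-substitution argument. Your additional remark about the non-Hermiticity of $f_m^\dagger f_n$ for $m\ne n$ is a valid caveat that the paper itself does not address.
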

\noindent as well as for the fidelity in Eq. \eqref{eq:ferfid}.
\begin{corollary}
Let $H$ be a free fermionic Hamiltonian with generic spectrum. Then, 
 \begin{equation}\label{eq:recTfF}
\frac{\Delta}{2e} \mathrm{exp} \left( \frac{(1-u)L}{e c^4 } \right) \le T(u,\Delta).
\end{equation}
\end{corollary}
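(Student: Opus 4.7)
The final corollary is the free-fermion counterpart of Corollary \ref{co:recurrenceF}, and my plan is to follow line by line the template of Corollary \ref{co:recurrenceA}, replacing the generic-spectrum tail bound by the free-fermion single-particle propagator bound. The key inputs are Theorem \ref{th:genmomentsFFF} and Lemma \ref{th:Apr}, which together yield the concentration estimate in Eq. \eqref{eq:concentrationff}.

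First, I would unpack the statement. For an initial state of the form $|\Psi\rangle = f_m^\dagger|0\rangle$ the fidelity is $F(t) = |a_{m,m}(t)|^2$, with $F(0) = 1$ and infinite-time average $\omega_{m,m} = \sum_k O_{m,k}^4$. Under the extendedness assumption of Definition \ref{def:genericfree}, $\omega_{m,m} = \mathcal{O}(1/L)$. A $(u,\Delta)$-recurrence on a window $\mathcal{C}_\Delta$ is the event $F(t) \ge 1-u$ for all $t \in \mathcal{C}_\Delta$, which I would rewrite as a one-sided large deviation of $F(t)$ away from its mean $\omega_{m,m}$.

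Second, I would estimate the density of such recurrences by the chain
\begin{align*}
\lim_{n \to \infty} \frac{\Delta n}{t_\Delta^n + \Delta}
&\le \mathrm{Pr}\!\left[F(t) \ge 1-u\right] \\
&\le \mathrm{Pr}\!\left[|F(t) - \omega_{m,m}| \ge 1-u-\omega_{m,m}\right] \\
&\le 2e\,\exp\!\left(-\frac{(1-u-\omega_{m,m})L}{e c^4}\right),
\end{align*}
where the last inequality is the free-fermion concentration bound specialized to $m=n$. The left-hand side equals $\Delta/T(u,\Delta)$ by the almost-periodicity argument appearing immediately after Definition \ref{def:Tu}. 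Solving for $T(u,\Delta)$ and dropping the subleading $\omega_{m,m} \sim 1/L$ correction in the exponent recovers the stated bound.

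The only genuinely nontrivial step is the last one: recognizing that $\omega_{m,m}$ is merely a $1/L$ correction to the $\mathcal{O}(1)$ factor $(1-u)$, and can therefore be absorbed in the large-$L$ regime where the bound carries physical content. Beyond this bookkeeping, the argument is a direct substitution into the Corollary \ref{co:recurrenceA} template, and no new machinery beyond Theorem \ref{th:genmomentsFFF} is required.
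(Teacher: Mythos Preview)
Your argument is correct and mirrors exactly what the paper intends: it reruns the proof of Corollary \ref{co:recurrenceA} with the free-fermion propagator concentration bound of Eq.~\eqref{eq:concentrationff} (from Theorem \ref{th:genmomentsFFF}) in place of Eq.~\eqref{eq:concentrationA}, and then absorbs the $\omega_{m,m}=\mathcal{O}(1/L)$ shift in the exponent, just as Corollary \ref{co:recurrenceF} absorbs $\overline{F}=\tr{\omega^2}$. The paper gives no separate proof for this corollary beyond the remark that it follows from the same considerations, so your write-up is already more explicit than the original.
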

The analogue of Eq. \eqref{eq:estRA} and Eq. \eqref{eq:estRF} also holds following the same considerations. These bounds however scale as $e^{ \Omega\left( \sqrt{L} \right)}$ and $e^{ \Omega\left( L \right)}$ respectively, which are exponential in the number of sites $L$. This is a fast scaling, but still exponentially slower than that from Corollaries \ref{co:recurrenceA} and \ref{co:recurrenceF}. Even shorter recurrence times are also found in specific instances of Bose gases \cite{KaminishiBoseGas,Solano2015,MoriLiebLiniger2019}, which can even be experimentally tested \cite{Rauer2018} with cold atoms.

%%%%%%%%%
%%%%%%%%%

\section{Conclusion}
We have shown how in systems with a generic spectrum as given in Definition \ref{def:genericspec} both observables and the fidelity with the initial state equilibrate around their time-averaged values, with out-of-equilibrium fluctuations suppressed exponentially in the effective dimension $\tr{\omega^2}^{-1}$. This number scales exponentially under very general conditions on the state and the Hamiltonian \cite{wilming2018entanglement,rolandi2020extensive,huang2020instability,Haferkamp_2021}, so in these systems fluctuations are most often doubly exponentially suppressed. 
Since partial or full recurrences are far from equilibrium fluctuations, our bounds yield an estimate of their occurrence, with a scaling that we believe is almost optimal. Equivalent results with a slower scaling also hold for free fermions.

Previous works \cite{Linden_2010,Friesdorf_2015,farrelly2017thermalization,wilming2018entanglement,huang2020instability,Haferkamp_2021} start with the bound on the second moment in \cite{Reimann08,Lin09} to obtain results on equilibration, so the present findings naturally strengthen them. Also, Theorem 2 in  \cite{alhambra2020time} extends \cite{Reimann08,Lin09} to two-point correlation functions, and the corresponding concentration bound is straightforward.

Our bounds on the recurrence time apply to individual states. A given Hamiltonian should also have other later state-independent recurrences. For instance, the recent result for random circuits \cite{Oszmaniec2022} suggests that a recurrence in complexity of $e^{-itH}$ might still doubly exponential, but with a larger exponent that Eq. \eqref{eq:estRF}. 

\acknowledgements
AMA acknowledges support from the Alexander von Humboldt foundation. J.R. and N.J.P. acknowledges support from
the Natural Sciences and Engineering Research Council of Canada (NSERC). 

\bibliography{references}

\onecolumngrid

\appendix

\setcounter{lemma}{0}
\setcounter{theorem}{0}

\section{Defining moments}\label{app:dom converge}
The average value of $\langle A(t) \rangle$ in Eq. 1 motivates the formal definition of the following probability distribution
$$
    P(x) = \lim_{T \rightarrow \infty} \int_0^T \frac{\text{d}t}{T} \delta (x - \langle A(t) \rangle).
$$
Note that this integral is convergent since this distribution is well-defined by the argument given between equations \ref{eq:moments}-\ref{eq:probx}. We would like to compute the moments as
$$
    \mu_q \equiv  \lim_{T \rightarrow \infty} \int_0^T \frac{\text{d}t}{T} \left(\langle A(t) \rangle - \overline{\langle A \rangle } \right)^q.
$$
This requires swapping the integral over $x$ and limit in $T$ and can be justified using the dominated convergence theorem. To use this famous result we must prove that the absolute value of the integrand is bounded by an integrable function.  Consider 
$$\kappa_{q} =\int x^q \lim_{T \rightarrow \infty} P_{T}(x)dx.$$
For book-keeping purposes let $g(t) = \left(\langle A(t) \rangle -\overline{\langle A \rangle } \right)$. We may bound the integrand inside the limit
$$|x^{q}P_{T}(x)| = |\frac{x^{q}}{T} \int_0^T \text{d}t \delta (x -g(t))|\leq  \frac{|x|^{q}}{T}\int_0^T \text{d}t \delta (x - g(t)).$$ 
Integrating the absolute value of the right-hand side and applying Fubini's theorem we find
$$ \int \left|\frac{|x|^{q}}{T}\,\int_0^T \text{d}t \delta (x - g(t))\right| dx\leq \int \frac{|x|^{q}}{T}\,\left|\int_0^T \text{d}t \delta (x - g(t))\right| dx \leq \int \frac{|x|^{q}}{T}\,\int_0^T \text{d}t \delta (x - g(t)) dx $$
$$\leq \frac{1}{T}\,\int_0^T \text{d}t |g(t)|^{q}.  $$
The function $g(t)$ is continuous and therefore bounded on the interval $[0,T]$, so the integral is finite, therefore its positive and negative components are as well, thus the $q$-th absolute moment is bounded. We may therefore conclude that the absolute value of the integrand is bounded by an integrable function, so by the dominated convergence theorem we may swap the integral and limit.

Note that equivalently, one may define the moments for finite $T$ and then take the limit.

\section{Generic spectra}\label{app:generic}

 \begin{lemma}
 	The set of generic Hermitian matrices in $M_{d}(\mathbb{C})$ has full Lebesgue measure.
 \end{lemma}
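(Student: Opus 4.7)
The plan is to realize the non-generic set as the vanishing locus of a single nonzero real polynomial in the $d^2$ real parameters of a Hermitian matrix, after which measure zero follows from the standard fact that the zero set of a nontrivial real polynomial has Lebesgue measure zero. First I would observe that Definition \ref{def:genericspec} only imposes a nontrivial constraint when $q \in \{1,2,\ldots,d\}$, and for each such $q$ there are only finitely many ordered pairs of distinct $q$-subsets $\Lambda \neq \Lambda' \subseteq \{1,\ldots,d\}$. The set of non-generic matrices is therefore the finite union
\[
\mathcal{N} \;=\; \bigcup_{q=1}^{d}\; \bigcup_{\Lambda\neq\Lambda'} \Bigl\{ M : \textstyle\sum_{j\in\Lambda} E_j(M) = \sum_{j\in\Lambda'} E_j(M) \Bigr\},
\]
and by sub-additivity it suffices to bound each piece.

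For each $q$ I would bundle all the pairs into a single object by defining
\[
Q_q(E_1,\ldots,E_d) \;=\; \prod_{\Lambda\neq\Lambda'} \Bigl( \textstyle\sum_{j\in\Lambda} E_j - \sum_{j\in\Lambda'} E_j \Bigr),
\]
the product taken over all ordered pairs of distinct $q$-subsets. Permuting the indices $\{1,\ldots,d\}$ merely permutes the factors of this product, so $Q_q$ is symmetric in $E_1,\ldots,E_d$. By the fundamental theorem of symmetric polynomials, $Q_q$ is a polynomial in the elementary symmetric functions of the eigenvalues, which are (up to sign) the coefficients of $\det(\lambda I - M)$ and hence polynomials in the real and imaginary parts of the entries of $M$. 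Thus $M \mapsto Q_q(E_1(M),\ldots,E_d(M))$ is a real polynomial in $d^2$ real variables, and the $q$-th piece of $\mathcal{N}$ is contained in its zero set.

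The remaining task, and the only place where something concrete must be shown, is to check that this polynomial is not identically zero. I would exhibit a single matrix where every factor is nonzero: take $M_0 = \mathrm{diag}(2^0, 2^1, \ldots, 2^{d-1})$, so that $E_j(M_0) = 2^{j-1}$. By uniqueness of binary expansion, any two distinct subsets of $\{0,1,\ldots,d-1\}$ produce distinct sums, so no factor vanishes at $M_0$ and $Q_q(M_0) \neq 0$. The potential subtlety that individual eigenvalue branches $E_j(M)$ fail to be smooth at matrices with coalescing eigenvalues never arises, precisely because $Q_q$ is symmetric in the eigenvalues and is therefore manipulated entirely at the level of characteristic polynomial coefficients. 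Finally, a nonzero real polynomial on $\mathbb{R}^{d^2}$ has zero set of Lebesgue measure zero, and a finite union of null sets is null, so $\mathcal{N}$ has measure zero and the lemma follows.
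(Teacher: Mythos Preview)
Your proof is correct and follows essentially the same strategy as the paper's: encode the non-generic condition as the vanishing of a polynomial that is symmetric in the eigenvalues, invoke the fundamental theorem of symmetric polynomials to express it as a polynomial in the entries of $M$, and conclude via the measure-zero property of zero sets of nontrivial polynomials. You are in fact more careful than the paper in one place: you explicitly verify, via the diagonal matrix with entries $2^{j-1}$, that the resulting polynomial is not identically zero---a step the paper's proof omits but which is essential (and without which, as written, the paper's product $F$ actually vanishes identically once $q\ge 2$, since e.g.\ the factor with $(n_1,m_1,n_2,m_2)=(1,2,2,1)$ is zero).
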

 \begin{proof}
  Note that this proof is similar to the proof that the set of non-diagonalizable matrices has Lebesgue measure zero. Let $H$ be some Hermitian $d \times d$ matrix. We start by defining the function 
 \begin{equation*}
 F(H) = \prod_{\substack{ n_{1},m_{1},...,n_{q},m_{q}\\ {n_{i}} \not = {m_{i}} }}\left(\sum_{i=1}^{q} E_{n_i} -E_{m_i}\right).
 \end{equation*}
 This function is zero precisely when the spectrum is  not generic. Clearly swapping eigenvalues does not change the function $F$, i.e. $F$ is a symmetric polynomial of the eigenvalues. By the fundamental theorem of symmetric (real) polynomials $F$ can be written uniquely as a polynomial in the elementary symmetric polynomials in $E_{i}$'s, which are precisely trace powers of $H$.

 Recall that $H$ can be expressed in some basis. For example some generalized Pauli basis, or even the standard basis.  This is conceptually equivalent to saying that the vector space of all possible $H$ can be parameterized by the coefficients of the basis elements in the expansion.  Thus, we may expand $H$ in this basis and then take trace powers, showing us that $F$ is a real polynomial  in the space of these coefficients. 
 
 It is a well known fact from measure theory that  the zero set of a multivariate polynomial has Lebesgue measure zero. 
 \end{proof}

\section{Bounding the moments}
\subsection{Proof for generic models}

\begin{theorem} \label{th:genmoments}
Let $H$ have a generic spectrum, $\omega$ the diagonal ensemble, and $\vert \vert A \vert \vert$ the largest singular value of $A$. The moments in Eq. 4 in the main text are such that
\begin{equation}
    \mu_q \le \left( q \vert \vert A \vert \vert \sqrt{\mathrm{Tr}[\omega^2]}  \right)^q.
\end{equation}
\end{theorem}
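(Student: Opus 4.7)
The plan is to expand $\mu_q$ in the energy eigenbasis, use Definition \ref{def:genericspec} to collapse the time average onto index configurations labelled by fixed-point-free permutations (derangements) of $S_q$, and then bound the resulting sum cycle-by-cycle using standard trace inequalities. Writing $c_n = \langle E_n | \psi \rangle$ and $p_n = |c_n|^2$, the non-degeneracy of the spectrum (a consequence of genericity) gives $\overline{\langle A \rangle} = \sum_n p_n A_{nn}$, and hence $f(t) - \overline{f} = \sum_{m\ne n} c_m^* c_n A_{mn} e^{i(E_m - E_n)t}$. Raising to the $q$-th power and integrating in time, only the terms with $\sum_{i=1}^q (E_{m_i} - E_{n_i}) = 0$ survive. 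Applied to the multi-index sequences $(m_i)$ and $(n_i)$, Definition \ref{def:genericspec} forces these to be permutations of each other, so there exists $\pi \in S_q$ with $n_i = m_{\pi(i)}$; the constraint $m_i \ne n_i$ further forces $\pi$ to be a derangement. Because $\pi$ is a bijection, $\prod_i c_{m_i}^* c_{n_i}$ collapses to $\prod_i p_{m_i}$, so that taking absolute values (and, for an upper bound, enlarging the sum to unrestricted $\vec m$, which only adds non-negative terms) gives
\[
|\mu_q| \;\le\; \sum_{\pi \in D_q} \sum_{m_1,\ldots,m_q} \prod_{i=1}^q p_{m_i}\, |A_{m_i m_{\pi(i)}}|,
\]
where $D_q \subset S_q$ is the set of derangements.

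Next I would factor the index sum over the disjoint cycles of $\pi$. A single cycle of length $k \ge 2$ contributes
\[
S_k \;=\; \sum_{m_1,\ldots,m_k} p_{m_1}\cdots p_{m_k}\, |A_{m_1 m_2}|\cdots |A_{m_k m_1}| \;=\; \tr{X^k},
\]
where $X$ is the real symmetric matrix with entries $X_{mn} = \sqrt{p_m p_n}\,|A_{mn}|$. The elementary $\ell^p$-norm inequality $\sum_i |\lambda_i|^k \le (\sum_i \lambda_i^2)^{k/2}$ for $k\ge 2$, applied to the (real) eigenvalues of $X$, then yields $S_k \le (\tr{X^2})^{k/2}$. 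The Hilbert--Schmidt norm is in turn controlled using $p_m p_n \le \tfrac12(p_m^2 + p_n^2)$ together with the row-sum bound $\sum_n |A_{mn}|^2 \le \norm{A}^2$:
\[
\tr{X^2} \;=\; \sum_{m,n} p_m p_n |A_{mn}|^2 \;\le\; \sum_m p_m^2 \sum_n |A_{mn}|^2 \;\le\; \norm{A}^2 \tr{\omega^2},
\]
so $S_k \le \big(\norm{A}\sqrt{\tr{\omega^2}}\big)^k$. Since all cycles of a derangement have length $\ge 2$ and $\sum_j k_j = q$, each derangement contributes at most $\prod_j S_{k_j} \le (\norm{A}\sqrt{\tr{\omega^2}})^q$; summing over $|D_q| \le q!$ derangements and using $q! \le q^q$ gives the claimed $\mu_q \le (q \norm{A}\sqrt{\tr{\omega^2}})^q$.

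The main obstacle is the careful bookkeeping of the multi-index sum under Definition \ref{def:genericspec}, in particular verifying that the condition can be applied to multi-sequences with repeats (which is the natural multiset version of the statement and is what is actually needed here), and that the over-counting introduced by non-unique $\pi$'s when $\vec m$ has repeats only inflates the upper bound. The key technical insight---and the source of the exponential improvement over the second-moment bounds of \cite{Reimann08,short2011equilibration}---is the observation that each cycle of $\pi$ yields exactly a trace power $\tr{X^k}$, which saturates at $(\tr{X^2})^{k/2}$ via the $\ell^p$-norm inequality and thus produces a factor of $\sqrt{\tr{\omega^2}}$ per cycle-index rather than per cycle; this is what generates the full $\tr{\omega^2}^{q/2}$ dependence needed to feed Lemma \ref{th:Apr}.
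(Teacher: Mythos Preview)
Your proof is correct and shares the paper's overall architecture: expand $\mu_q$ in the energy eigenbasis, use genericity to collapse the time average onto derangements of $S_q$, factor each derangement over its cycles, and bound each cycle contribution by $(\norm{A}\sqrt{\tr{\omega^2}})^{\ell}$. The difference lies in how that cycle bound is obtained. The paper keeps the phases and identifies a length-$\ell$ cycle with $\tr{(A\omega)^{\ell}}$ for the non-Hermitian matrix $A\omega$; it then bounds $\sum_i|\lambda_i(A\omega)|^{2}\le \tr{A\omega\,\omega A^{\dagger}}\le \norm{A}^{2}\tr{\omega^{2}}$ via the Schur decomposition and an operator-norm estimate. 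You instead take absolute values up front, package the cycle as $\tr{X^{k}}$ for the \emph{real symmetric} matrix $X_{mn}=\sqrt{p_mp_n}\,|A_{mn}|$, and bound $\tr{X^{2}}=\sum_{m,n}p_mp_n|A_{mn}|^{2}$ using $p_mp_n\le\tfrac12(p_m^{2}+p_n^{2})$ together with the row-sum bound $\sum_n|A_{mn}|^{2}=(AA^\dagger)_{mm}\le\norm{A}^{2}$. Your route is more elementary (real spectrum throughout, no Schur form needed) at the cost of the absolute-value/enlargement steps and the over-counting for repeated indices, which you correctly note only inflate the upper bound; the paper's route avoids those bookkeeping steps but needs the slightly heavier linear-algebra lemma. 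Both arrive at exactly the same inequality $\mu_q\le\,!q\,(\norm{A}\sqrt{\tr{\omega^{2}}})^{q}\le (q\norm{A}\sqrt{\tr{\omega^{2}}})^{q}$.
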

\begin{proof}
 First we prove the following inequality: \begin{equation} \label{eq:qbound}
     	|\tr{ (A \omega)^{q}} |\leq \left( ||A|| \sqrt{\tr{ \omega^{2}}}\right)^{q}.
 \end{equation}
 To realize this, consider the matrix $A\omega$, which is not Hermitian and therefore may have complex eigenvalues and may potentially not be diagonalizable. This, however, does not prevent us from finding a complete set of eigenvalues such that their multiplicity summed is the dimension of $A\omega$. The matrix is always similar to its Jordan form, and we can in general always write
 \begin{equation}
     \tr{A\omega} = \sum_i \lambda_i,
 \end{equation}
 where $\lambda_i$ is the $i$-th (potentially complex) eigenvalue of $A\omega$. More generally, we can always write
 \begin{equation}
     \tr{(A\omega)^q} = \sum_i \lambda_i^q.
 \end{equation}
 
 From here we can bound the following:
 	\begin{align}
		|\tr{ (A \omega)^{q}} |&=|\sum_{i}  \lambda_{i}^{q}| \\ \label{eq:C6}
	&\leq \sum_{i}| \lambda_{i}^{q}| \\
	&= ||\lambda||^{q}_{q} \\ \label{eq:C8}
	& \leq ||\lambda||_{2}^{q} = \left(\sum_{i}|\lambda_{i}|^{2}\right)^{q/2} 
	\end{align}
where we used the triangle inequality in \eqref{eq:C6}, and in \eqref{eq:C8} we use the property $||x||_{p+a} \leq ||x||_p$ for any vector $x$ and real numbers $p\geq 1$ and $a\geq 0$. Note that, using the Shur decomposition, we may write $A\omega = Q U Q^{\dagger}$, where $Q$ is a unitary matrix, and $U$ is upper triangular with the same spectrum on the diagonal as $A \omega$. Using this, we have that
	\begin{equation*}
	\tr {(A\omega)(A \omega)^{\dagger}} = \tr{ QU Q^{\dagger} (Q U Q^{\dagger})^{\dagger} }
	=\tr{ UU^{\dagger} } = 	\sum_{i}|\lambda_{i}|^{2} + \text{other non-negative terms}.
	\end{equation*}
	Thus,
	
	\begin{align*}
	\sum_{i}|\lambda_{i}|^{2}\leq   \tr{ A \omega \omega^{\dagger} A^{\dagger}} \leq ||A A^\dagger|| \tr{ \omega \omega^{\dagger}}  \leq ||A||^{2} \tr{ \omega \omega^{\dagger}} = ||A||^{2} \tr {\omega^{2}}.
	\end{align*}
	This gives us our desired inequality.
	
	Moving on, let us derive a general bound for the $q$-th moment of models satisfying Definition 1. For simplicity, and w.l.o.g., let us assume that $\overline{\langle A \rangle }=0$. Expanding the definition of the moments we arrive at

\begin{equation}\label{eq:kappaq1}
\mu_q = \lim_{\tau\to \infty} \frac{1}{\tau} \int_0^\tau dt \sum_{m_{1},n_{1},...,m_{q},n_{q}} \prod_{i=1}^{q}\left(A_{m_{i},n_{i}}\bar{c}_{m_{i}} c_{n_{i}}\right)e^{i(E_{m_{i}} -E_{n_{i}})t}.
\end{equation}
The assumption that the Hamiltonian is generic means that only certain terms in the sum survive after averaging over all time: those for which the sets of $\{m_i\}$ and $\{n_i\}$ coincide up to permutations, which we denote with $\{\sigma(i) \}$.  Due to the equilibrium expectation value being zero, we can also eliminate all terms for which $i = \sigma(i)$ for $1\leq i \leq q$. Thus, we want all permutations on $q$ elements except those that have a fixed point. Such permutations are called derangements. The number of distinct derangements is denoted by $!q=\lfloor \frac{q!}{e} +\frac{1}{2}\rfloor$. Let $D_{q}$ denote the set of derangements on $\{1,2,..,q\}$. Eq. \eqref{eq:kappaq1} becomes

\begin{eqnarray}
\mu_q = \sum_{m_{1},...,m_{q}} \prod_{i=1}^{q} |c_{m_{i}}|^{2} \sum_{\sigma \in D_{q}}\prod_{i=1}^{q} A_{m_i,\sigma(m_i)}.
\end{eqnarray}

Given a derangement $\sigma$, it can be decomposed as the product of cycles $\sigma_{1},\sigma_{2},...,\sigma_{r}$ with lengths $\ell_{1},\ell_{2},...,\ell_{r}$, respectively, such that $\sum_{j=1}^r l_j = q$. In each term of the inner summation we can collect terms of the same cycle. For example for $ q= 6$ and $ \sigma = \sigma_{1} \sigma_{2} =  (m_{1},m_{2})(m_{3},m_{4},m_{5},m_{6})$ the term can be written as
\begin{equation*}
	(A_{m_1,m_2} A_{m_2,m_1}) (A_{m_3,m_4} A_{m_4,m_5} A_{m_5,m_6} A_{m_6,m_3} ).
\end{equation*} 
Summing over $m_{1},m_{2},m_{3},m_{4},m_{5},m_{6}$ we have that this term is precisely 
\begin{equation*}
	\tr{ (A\omega)^{2} }\tr {(A\omega)^{4}}.
\end{equation*}
In general, each cycle of a term  will correspond to a product of trace powers  i.e. $\sigma = \sigma_{1},\sigma_{2},...,\sigma_{r}$ corresponds to  \begin{equation*}
 		\tr {(A\omega)^{\ell_1} }\tr { (A\omega)^{\ell_2}}... \tr {(A\omega)^{\ell_r} }.
 \end{equation*}
 We may apply Eq. \eqref{eq:qbound} term-wise to get 
\begin{equation*}
		\tr{ (A\omega)^{\ell_1} }\tr {(A\omega)^{\ell_2}}... \tr {(A\omega)^{\ell_r}}  \leq \left( ||A|| \sqrt{\tr{ \omega^{2}}}\right)^{\ell_{1}+\ell_{2}+...+\ell_{r}} = \left( ||A|| \sqrt{\tr{ \omega^{2}}}\right)^q.
\end{equation*}
In each moment's inner summation there are precisely $!q$ terms of this form because there are $!q$ derangements, thus
\begin{eqnarray}
\mu_q 
\leq & !q \left( ||A|| \sqrt{\tr{ \omega^{2}}}\right)^q \le   \left( q ||A|| \sqrt{\tr{ \omega^{2}}} \right)^q.
\end{eqnarray}

\end{proof}
The $q =2$ case can be found in \cite{short2011equilibration}.

\begin{theorem}
Let $H$ have a generic spectrum and let $A = \ket{\Psi} \bra{\Psi}$, then
\begin{equation}
    \mu_q \le \left( q \tr{\omega^2} \right)^q.
\end{equation}
\end{theorem}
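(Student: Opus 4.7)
The plan is to follow the cycle-decomposition argument of Theorem~\ref{th:genmoments}, exploiting a special simplification that occurs when $A = \ket{\Psi}\bra{\Psi}$. Writing $\ket{\Psi} = \sum_m c_m \ket{E_m}$ and $p_m = |c_m|^2$, the key observation is that $A\omega = \ket{\Psi}\bra{\chi}$ with $\bra{\chi} = \bra{\Psi}\omega$ is rank one. Since any rank-one operator $\ket{u}\bra{v}$ satisfies $\tr{(\ket{u}\bra{v})^\ell} = \bracket{v}{u}^\ell$, and $\bracket{\chi}{\Psi} = \sum_m p_m^2 = \tr{\omega^2}$, we obtain the identity
\begin{equation*}
    \tr{(A\omega)^\ell} = \tr{\omega^2}^\ell \quad \text{for every integer } \ell \ge 1.
\end{equation*}
This replaces the looser estimate $|\tr{(A\omega)^\ell}| \le (\norm{A}\sqrt{\tr{\omega^2}})^\ell$ used in the general case and is the origin of the improvement from $\sqrt{\tr{\omega^2}}$ to $\tr{\omega^2}$ in the final bound.

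With this identity in hand, the proof proceeds in parallel with Theorem~\ref{th:genmoments}. I would expand $F(t)-\overline{F} = \sum_{m\ne n} p_m p_n e^{i(E_m-E_n)t}$ in the energy basis, raise it to the $q$-th power, and time average. The generic spectrum assumption ensures that only multi-indices $(m_1,n_1,\ldots,m_q,n_q)$ with $\{m_i\} = \{n_i\}$ as multisets survive. Parametrizing such configurations by permutations $\sigma$ via $n_i = m_{\sigma(i)}$, and decomposing $\sigma$ into cycles of lengths $\ell_1,\ldots,\ell_r$ with $\sum_j \ell_j = q$, each surviving term factorizes, exactly as in the original argument, into $\tr{(A\omega)^{\ell_1}}\cdots \tr{(A\omega)^{\ell_r}}$. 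By the identity above, this product equals $\tr{\omega^2}^q$ for every $\sigma$, independent of the cycle structure.

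The final step is to count the relevant permutations. The constraint $m_i \ne n_i$ restricts $\sigma$ to derangements, giving $!q$ terms, but since all contributions are non-negative we can simply overcount with the full $q!$ permutations of $\{1,\ldots,q\}$, yielding $\mu_q \le q!\,\tr{\omega^2}^q \le q^q\,\tr{\omega^2}^q = (q\,\tr{\omega^2})^q$ via $q! \le q^q$.

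The one point requiring care, and the only real obstacle I anticipate, is that $\overline{F} = \tr{\omega^2}$ is nonzero, so one cannot directly invoke the assumption $\overline{\langle A \rangle} = 0$ used in the proof of Theorem~\ref{th:genmoments}. Subtracting $\tr{\omega^2}\,\id$ from $A$ would destroy the rank-one structure that yields the key trace identity, so the correct approach is to carry out the cycle-decomposition bookkeeping for $F(t)-\overline{F}$ directly in the eigenbasis. A minor secondary subtlety, the non-uniqueness of $\sigma$ when the $m_i$ have repeats, is painlessly resolved by non-negativity of $p_m p_n$, which makes overcounting harmless for an upper bound.
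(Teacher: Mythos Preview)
Your proposal is correct and structurally the same as the paper's: expand $F(t)-\overline{F}$ in the energy basis, use genericity to reduce to derangements, and show each derangement contributes $\tr{\omega^2}^q$. The only difference is in how that last step is carried out. You route through the cycle decomposition of Theorem~\ref{th:genmoments} together with the rank-one identity $\tr{(A\omega)^\ell}=\tr{\omega^2}^\ell$; the paper instead notes directly that $\prod_i p_{m_i}p_{m_{\sigma(i)}}=\prod_i p_{m_i}^2$ because $\sigma$ is a bijection, so the sum over $m_1,\dots,m_q$ factorizes as $(\sum_m p_m^2)^q=\tr{\omega^2}^q$ without ever invoking cycles. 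Your rank-one observation is a clean way to see exactly where the $\sqrt{\tr{\omega^2}}\to\tr{\omega^2}$ improvement over Theorem~\ref{th:genmoments} originates; the paper's version is a one-liner. Your concern about $\overline{F}\ne 0$ is well placed and your resolution is the right one: since only off-diagonal matrix elements $A_{m,n}$ with $m\ne n$ enter the expansion of $F(t)-\overline{F}$, the cycle traces still see the original rank-one $A\omega$, not a shifted version.
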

\begin{proof}
The moments defined for the fidelity are defined as

\begin{eqnarray}
     \mu_q  = \lim_{\tau\to \infty} \frac{1}{\tau} \int_0^\tau dt \prod_{i=1}^q \sum_{m_i \neq n_i} |c_{m_i}|^2 |c_{n_i}|^2 e^{i(E_{m_i}-E_{n_i})t}, \\
     = \sum_{m_1,\dots m_q} \prod_{i=1}^q |c_{m_i}|^2 \prod_{\sigma \in D_q} |c_{\sigma(m_i)}|^2.
\end{eqnarray}
In the above expression we can note that there are $!q$ possible derangements using genericity given in Definition 1. Each $m_i$ will have one pair given to us from $\sigma(m_i$), implying each individual term in the sum is $\tr{\omega^2}^q$, so our final expression is 
\begin{equation}
    \mu_q = !q \tr{\omega^2}^q \leq \left( q \tr{\omega^2} \right)^q.
\end{equation}

\end{proof}

\subsection{Generic free models}

This class of models conserves total particle number, which we will denote as

\begin{equation}
    N = \sum_{j=1}^L \langle f_j^\dagger f_j \rangle = \sum_{k=1}^L \langle d_k^\dagger d_k \rangle.
\end{equation}

\begin{theorem}
Let $H$ be a extended free fermionic Hamiltonian with a generic spectrum and let $A =  f_m^\dagger f_n$. Then, the even moments are bounded above by
\begin{equation}
    \kappa_{q} \leq \left(   qc^2 \sqrt{\frac{\nu}{L}}    \right)^{q},
\end{equation}
where $\nu = \frac{N}{L}$ is the filling factor of the fermions on the lattice and $c = \sqrt{L} \max_{k_j} \{ O_{m,k_j}, O_{n,k_j}\} $.

\end{theorem}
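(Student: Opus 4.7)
\textbf{Proof plan for Theorem \ref{th:genmomentsFF}.} The plan is to mirror the derangement/cycle bookkeeping used for Theorem \ref{th:genmoments}, but now applied to the single-particle (mode-space) two-point function $\gamma_{kl}\equiv\langle d_k^\dagger d_l\rangle$, and then to exploit the extensivity bound $|O_{j,k}|\le c/\sqrt{L}$ together with $\mathrm{Tr}[\gamma^2]\le \mathrm{Tr}[\gamma]=N$ to produce the factor $c^2\sqrt{\nu/L}$ per cycle.

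First I would expand $A(t)=f_m^\dagger f_n(t)$ in the eigenmode operators $d_k=\sum_j O_{j,k}f_j$, obtaining
\[
\langle A(t)\rangle=\sum_{k,l} O_{m,k}O_{n,l}\,\gamma_{kl}\,e^{i(\epsilon_k-\epsilon_l)t}.
\]
By Definition \ref{def:genericfree} (applied at $q=2$), the time-averaged part is the $k=l$ diagonal, so
\[
\langle A(t)\rangle-\overline{\langle A\rangle}=\sum_{k\neq l}O_{m,k}O_{n,l}\gamma_{kl}\,e^{i(\epsilon_k-\epsilon_l)t}.
\]
Raising to the $q$-th power and time-averaging, the general genericity statement in Definition \ref{def:genericfree} forces $\{k_1,\dots,k_q\}=\{l_1,\dots,l_q\}$ as multisets, i.e.\ $l_i=k_{\sigma(i)}$ for some permutation $\sigma$; the constraint $k_i\neq l_i$ then rules out fixed points, leaving only derangements $\sigma\in D_q$.

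Next I would introduce the diagonal matrices $X=\mathrm{diag}(O_{m,k})$ and $Y=\mathrm{diag}(O_{n,k})$ so that $(X\gamma Y)_{kl}=O_{m,k}\gamma_{kl}O_{n,l}$. For a derangement $\sigma$ with cycle decomposition into cycles of lengths $\ell_1,\ldots,\ell_r$ (with $\sum_j\ell_j=q$), the mode-index summation factorises cycle by cycle, giving the clean identity
\[
\sum_{k_1,\ldots,k_q}\prod_{i=1}^q O_{m,k_i}O_{n,k_{\sigma(i)}}\gamma_{k_i,k_{\sigma(i)}}=\prod_{j=1}^r \mathrm{Tr}\bigl[(X\gamma Y)^{\ell_j}\bigr].
\]
This is the exact analogue of the cycle-trace expansion used for the generic case, with $A\omega$ replaced by $X\gamma Y$.

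Then I would bound each factor $\mathrm{Tr}[(X\gamma Y)^\ell]$ exactly as in Eqs.\ \eqref{eq:qbound}--\eqref{eq:C8}: passing to the Jordan/Schur form gives $|\mathrm{Tr}[(X\gamma Y)^\ell]|\le\bigl(\sum_i|\lambda_i|^2\bigr)^{\ell/2}\le \bigl(\mathrm{Tr}[(X\gamma Y)(X\gamma Y)^\dagger]\bigr)^{\ell/2}$. Using $X,Y$ real diagonal and $\gamma$ Hermitian,
\[
\mathrm{Tr}[X\gamma Y^2\gamma X]=\sum_{k,l}O_{m,k}^2 O_{n,l}^2 |\gamma_{kl}|^2\le \frac{c^4}{L^2}\,\mathrm{Tr}[\gamma^2]\le \frac{c^4}{L^2}\,\mathrm{Tr}[\gamma]=\frac{c^4\nu}{L},
\]
where I used $O_{j,k}^2\le c^2/L$ and the fact that $0\le\gamma\le\id$ (Pauli exclusion) implies $\gamma^2\le\gamma$, so $\mathrm{Tr}[\gamma^2]\le N=\nu L$. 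Hence $|\mathrm{Tr}[(X\gamma Y)^{\ell_j}]|\le (c^2\sqrt{\nu/L})^{\ell_j}$, and the product over a derangement collapses to $(c^2\sqrt{\nu/L})^q$ independent of the cycle type.

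Finally, summing over derangements contributes the counting factor $|D_q|=!q\le q!\le q^q$, yielding
\[
\mu_q\le !q\,\bigl(c^2\sqrt{\nu/L}\bigr)^q\le \bigl(q c^2\sqrt{\nu/L}\bigr)^q,
\]
as claimed. The main obstacle I anticipate is the bookkeeping step that identifies the cycle-wise trace factorisation — in particular checking that the Pauli-exclusion bound $\gamma^2\le\gamma$ is the right substitute for the purity $\mathrm{Tr}[\omega^2]$ that appears in the generic case, and that the extensivity hypothesis in Definition \ref{def:genericfree} enters uniformly as a prefactor $c^2/L$ per occurrence of $X$ or $Y$. Everything else is a direct transcription of the generic-spectrum argument.
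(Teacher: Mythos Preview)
Your proposal is correct and follows the same overall architecture as the paper's proof: expand in eigenmodes, invoke the generic free spectrum to reduce the time average to a sum over derangements, and then factorise each derangement contribution into cycle traces. The difference lies only in how each cycle factor is bounded. The paper pulls out the orthogonal-matrix entries immediately via $|O_{m,k}|,|O_{n,k}|\le c/\sqrt{L}$, leaving pure correlation-matrix cycles $\mathrm{Tr}[\gamma^{\ell_j}]\le \mathrm{Tr}[\gamma]=N=\nu L$ (using $0\le\gamma\le\id$), and then uses that a derangement on $q=2n$ elements has at most $n$ cycles to get $(c^2/L)^{2n}(\nu L)^n=(c^2\sqrt{\nu/L})^q$. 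You instead keep the matrix $X\gamma Y$ intact and recycle the Schur-decomposition inequality from Theorem~\ref{th:genmoments}, bounding $|\mathrm{Tr}[(X\gamma Y)^{\ell}]|\le (\mathrm{Tr}[X\gamma Y^2\gamma X])^{\ell/2}\le (c^4\nu/L)^{\ell/2}$; the product over cycles then collapses directly to $(c^2\sqrt{\nu/L})^q$ without needing to count cycles. Both routes land on the identical bound $!q\,(c^2\sqrt{\nu/L})^q$.

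One small caveat worth noting: the paper's $\kappa_{2n}$ is set up as $\overline{|\langle A(t)\rangle-\overline{\langle A\rangle}|^{2n}}$ (this is the reason for the alternating odd/even definition of the tensor $B_{k_j,l_j}$ in their proof), which for $m\neq n$ differs from the raw moment $\overline{(\langle A(t)\rangle-\overline{\langle A\rangle})^{q}}$ you compute. In that case the cycles carry alternating factors of $X\gamma Y$ and $Y\gamma X$ rather than pure powers of $X\gamma Y$, so your clean factorisation $\prod_j\mathrm{Tr}[(X\gamma Y)^{\ell_j}]$ does not literally apply. The paper's cruder move of stripping the $O$-factors before analysing cycles is agnostic to this alternation, which is why it is the more economical choice here; your argument goes through unchanged for $m=n$ and with a trivial modification (bound both $X\gamma Y$ and $Y\gamma X$ by the same $c^2/\sqrt{L}$ operator-level estimate) otherwise.
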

\begin{proof}
Consider

\begin{equation}
    \kappa_{2n} = \lim_{T\to \infty  } \frac{1}{T} \int_0^\infty dt  \prod_{j=1}^n     \sum_{k_j \neq l_j} O_{m,k_j}O_{n,l_j} \langle d_{k_j}^\dagger d_{l_j}\rangle e^{i(\epsilon_{k_j} - \epsilon_{l_j})t} \sum_{p_j \neq q_j} O_{m,p_j}O_{n,q_j} \langle d_{q_j}^\dagger d_{p_j}\rangle e^{i(\epsilon_{q_j} - \epsilon_{p_j})t}.
\end{equation}

Let us define the tensor (note the $j$ dependence)

\begin{equation}
    B_{k_j,l_j} = \begin{cases}
    O_{m,k_j}O_{n,l_j} \langle d_{k_j}^\dagger d_{l_j}\rangle & j \text{ odd} \\
      O_{m,l_j} O_{n,k_j} \langle d_{k_j}^\dagger d_{l_j}\rangle  & j \text{ even} \\
      0 & k_j = l_j
    \end{cases}.
\end{equation}

This allows us to rewrite our equation as

\begin{equation}
    \kappa_{2n} = \lim_{T\to \infty  } \frac{1}{T} \int_0^\infty dt  \prod_{j=1}^{2n}     \sum_{k_j,l_j} B_{k_j,l_j} e^{i(\epsilon_{k_j} - \epsilon_{l_j})t}.
\end{equation}
This can likewise be rewritten as

\begin{equation}
    \kappa_{2n} = \lim_{T\to \infty  } \frac{1}{T} \int_0^\infty dt \sum_{k_1,l_1,\dots k_{2n}, l_{2n} } \prod_{j=1}^{2n} B_{k_j,l_j} e^{i(\epsilon_{k_j} - \epsilon_{l_j})t}.
\end{equation}

Assuming a generic single-particle spectrum, this means we have the following surviving terms:

\begin{equation}
    \kappa_{2n} = \sum_{k_1,\dots k_{2n}  } \sum_{\sigma \in S_{2n}} \prod_{j=1}^{2n} B_{k_j, \sigma(k_j)}
\end{equation}

where $S_{2n}$ denotes the symmetric group on ${1,2 \dots 2n}$. We can then enforce the fact that these terms are zero if $k_j = \sigma(k_j)$ for $1\leq j \leq 2n$. So denoting the derangements as $D_{2n}$ as earlier, we arrive at

\begin{equation}
        \kappa_{2n} = \sum_{k_1,\dots k_{2n}  } \sum_{\sigma \in D_{2n}} \prod_{j=1}^{2n} B_{k_j, \sigma(k_j)}.
\end{equation}

Next, recognizing that each definition of $B$ contains two extensive terms multiplied, let 
$c = \sqrt{L} \max_{k_j} \{ O_{m,k_j}, O_{n,k_j}\} $, 

\begin{equation}
    \kappa_{2n} \leq \frac{c^{4n}}{L^{2n}}   \sum_{k_1,\dots k_{2n}  } \sum_{\sigma \in D_{2n}} \prod_{j=1}^{2n} \langle d_{k_j}^\dagger d_{\sigma (k_j)}\rangle
\end{equation}

As in Theorem \ref{th:genmoments}, each term will be a trace of powers of $\Lambda$, and there can be at most $n$ products of traces of $\Lambda$. Since $0\leq \Lambda \leq \mathbb{I}$, each trace of $\Lambda$ can further be bounded by $\tr {\Lambda^p} \leq \tr \Lambda = N = \nu L$, which means we can bound $\kappa_{2n}$ by
\begin{equation}
    \kappa_{2n} \leq !(2n) \frac{c^{4n} \nu^n}{L^{2n-n}} \leq \left( \frac{4n^2c^4 \nu}{L}   \right)^n,
\end{equation}

where $c$ is weakly dependent on system size and $0 \leq \nu \leq 1$. Choosing $q  = 2n$ and reorganizing gives the desired result. 
\end{proof}

\begin{theorem}
    Let $H$ be a free fermionic Hamiltonian with a generic spectrum, and let our dynamical  function $f(t) = |a_{m,n}(t)|^2$ be the squared single particle propagator, then we can bound the moments by
    \begin{equation}
        \kappa_{q} \leq \left( \frac{q c^4}{L}\right)^q, 
    \end{equation}
    where $c = \sqrt{L} \max_{k_j} \{ O_{m,k_j}, O_{n,k_j}\} $. 
\end{theorem}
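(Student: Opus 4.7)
The plan is to mirror the proof of Theorem \ref{th:genmomentsFF}, but to exploit a simplification unique to this observable. First I would expand $|a_{m,n}(t)|^2$ in the single-particle eigenmode basis, using the shorthand $v_k := O_{m,k} O_{n,k}$, so that $|a_{m,n}(t)|^2 - \omega_{m,n} = \sum_{k \neq l} v_k v_l\, e^{i(\epsilon_k - \epsilon_l) t}$. The $q$-th moment then becomes a $2q$-fold sum over indices $k_1, l_1, \ldots, k_q, l_q$ with the constraints $k_j \neq l_j$, multiplied by the time average of the exponential $\exp\!\left[it \sum_j (\epsilon_{k_j} - \epsilon_{l_j})\right]$. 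By Definition \ref{def:genericfree}, this time average vanishes unless $\{\epsilon_{l_j}\}_j$ coincides as a multiset with $\{\epsilon_{k_j}\}_j$, which by non-degeneracy of the single-particle spectrum is equivalent to $l_j = k_{\sigma(j)}$ for some permutation $\sigma \in S_q$. The constraint $k_j \neq l_j$ then restricts $\sigma$ to be a derangement whenever the entries of $\vec{k}$ are distinct.

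The key algebraic observation is that, whenever $\vec{l}$ is a rearrangement of $\vec{k}$,
\begin{equation*}
    \prod_{j=1}^q v_{k_j} v_{l_j} = \left(\prod_{j=1}^q v_{k_j}\right)^2 = \prod_{j=1}^q v_{k_j}^2,
\end{equation*}
which depends only on $\vec{k}$ and not on the particular pairing $\sigma$. Thus for each $\vec{k}$ the sum over valid $\vec{l}$'s reduces to counting derangements of the multiset $\{k_j\}$, which is at most $!q$. Summing and applying this bound gives
\begin{equation*}
    \kappa_q \le\, !q \sum_{k_1, \ldots, k_q} \prod_{j=1}^q v_{k_j}^2 = \, !q \left(\sum_k v_k^2\right)^q = \, !q\, \omega_{m,n}^{\,q}.
\end{equation*}

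To finish, I would invoke the extensivity assumption $|O_{j,k}| \le c/\sqrt{L}$ in Definition \ref{def:genericfree} to bound $|v_k| \le c^2/L$, whence $\omega_{m,n} = \sum_k O_{m,k}^2 O_{n,k}^2 \le L \cdot (c^2/L)^2 = c^4/L$, and then use the elementary estimate $!q \le q^q$ to arrive at $\kappa_q \le (q c^4/L)^q$, as claimed. The main difference from Theorem \ref{th:genmomentsFF} is that here the product does not decompose into a product of traces of powers of some matrix following the cycle structure of $\sigma$; instead, because each index appears symmetrically from the positive and the negative frequency slots, everything collapses immediately to $\prod_j v_{k_j}^2$, and the argument is substantially shorter. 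The only minor care required is the upper bounding of the constrained sum over $\vec{k}$ by the unconstrained one, which is justified by positivity of $v_k^2$; I expect no substantive obstacle beyond this bookkeeping.
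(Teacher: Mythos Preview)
Your proposal is correct and follows essentially the same route as the paper's proof: expand in eigenmodes, apply the generic-spectrum assumption to reduce the time average to a sum over derangements, and then bound factor-wise using $|O_{j,k}| \le c/\sqrt{L}$ and $!q \le q^q$. Your intermediate observation that the product collapses to $\prod_j v_{k_j}^2$, yielding the slightly sharper $\kappa_q \le\,!q\,\omega_{m,n}^{\,q}$, is a clean refinement the paper does not state explicitly, but the overall argument and final bound are the same.
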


\begin{proof}
The $q$-th moment can be written as 
\begin{equation}
    \mu_q = \lim_{T\to \infty  } \frac{1}{T} \int_0^\infty dt \prod_{i=1}^q \sum_{k_i \neq l_i} O_{m,k_i} O_{n,k_i} O_{m,l_i} O_{n,l_i} e^{i(\epsilon_{k_i}-\epsilon_{l_i})t},
\end{equation}
through the usual procedure and using the definition 2 in the main text we recover
\begin{equation}
    \mu_q = \sum_{k_1,\dots k_q} \prod_{i=1}^q  O_{m,k_i} O_{n,k_i} \prod_{\sigma \in D_q}  O_{m,\sigma(k_i)} O_{n,\sigma(k_i)},
\end{equation}
defining $c = \sqrt{L} \max_{k_j} \{ O_{m,k_j}, O_{n,k_j}\}$, we factor out four of these, and sum up the indices, giving us 
\begin{equation}
    \mu_q \leq \frac{!q c^{4q}}{L^{q}} \leq \left( \frac{q c^4}{L}\right)^q.
\end{equation}
\end{proof}

\section{Numerics} \label{app:numerics}
The numerics for the figures in the main body were carried out on the spin 1/2 Hamiltonian, 

 \begin{align}
	{H} =  &\sum_{j=1}^L J_1 \left(  {S}_j^+ S_{j+1}^- + \text{h.c}\right) + \gamma_1  \, {S}_j^Z    {S}_{j+1}^Z  \nonumber +  J_2 \left(  {S}_j^+  {S}_{j+2}^- + \text{h.c}\right)+ \gamma_2   {S}_j^Z  {S}_{j+2}^Z, \label{eq:hamiltonian}
\end{align}
where $(J_1,\gamma_1,J_2,\gamma_2) \nobreak = \nobreak (-1,1,-0.2,0.5)$ giving us a non-integrable model. We perform exact diagonalization exploiting total spin conservation and translation invariance. We choose pure initial states that allow us to further exploit the $Z_2$ spin flip symmetry and the spatial reflection symmetry. In Fig. 1 we see the approximated probability distribution function $\Tilde{P}_T(x)$ as a histogram. The observable is $A = \sigma_1^Z$, the Pauli-z matrix on the first lattice site. The initial state is a N\'eel type state: 
\begin{equation}
    |\psi\rangle = |\uparrow \downarrow \dots \rangle.
\end{equation}
In Fig. 2 we calculate the purity of the diagonal ensemble $\tr{\omega^2}$ for three states. The states featured are
\begin{align} \label{eq:state1}
    |\psi\rangle & \coloneqq |\uparrow\downarrow\uparrow\downarrow\dots ..\rangle, 
    \\ \label{eq:state2}
 |\psi'\rangle &\coloneqq \frac{1}{\sqrt{2}}\left( |\uparrow\downarrow\uparrow\downarrow\dots ..\rangle+|\downarrow \uparrow\downarrow\uparrow\dots ..\rangle\right), 
\\
 \label{eq:state3}
        |\phi \rangle &\coloneqq \frac{1}{\sqrt{L}}\sum_{r=0}^{L-1} \hat{T}^r|\uparrow\uparrow \dots \uparrow \downarrow \dots \downarrow \downarrow \rangle.
\end{align}		

%\bibliography{references}

\end{document}